\newtheorem{thm}{Theorem}[section]
\newtheorem{lem}{Lemma}[section]
\newtheorem{cor}{Corollary}[section]
\newtheorem{rem}{Remark}[section]
\newtheorem{exa}{Example}[section]
\def\<{\left<}\def\>{\right>}
\def\({\left(}\def\){\right)}
\begin{document}

\begin{frontmatter}

\title{Generalized Expected Discounted Penalty Function at General Drawdown for L\'{e}vy Risk Processes\tnoteref{thanks}}
\tnotetext[thanks]{This work was partially supported by the National Natural Science Foundation of China (Nos. 11661074 and 11701436), the Program for New Century Excellent Talents in Fujian Province University (No. Z0210103) and the Fundamental Research Funds for the Central Universities (Nos. 20720170096 and 2018IB019).}

\author[]{Wenyuan WANG$^a$}
\ead{wwywang@xmu.edu.cn}

\author[]{Ping CHEN$^b$\corref{cor1}}
\ead{pche@unimelb.edu.au}\cortext[cor1]{Corresponding author. Tel:+61 90358053}

\author{Shuanming LI$^b$}
\ead{shli@unimelb.edu.au}

\address[a]{School of Mathematical Sciences, Xiamen University, Xiamen 361005, P. R. China.}
\address[b]{Department of Economics, The University of Melbourne, Parkville,
Victoria 3010, Australia.}

\begin{abstract}

This paper considers an insurance surplus process modeled by a spectrally negative L\'{e}vy process. Instead of the time of ruin in the traditional setting, we apply the time of drawdown as the risk indicator in this paper. We study the joint distribution of the time of drawdown, the running maximum at drawdown, the last minimum before drawdown, the surplus before drawdown and the surplus at drawdown (may not be deficit in this case), which generalizes the known results on the classical expected discounted penalty function in Gerber and Shiu (1998). The results have semi-explicit expressions in terms of the $q$-scale functions and the L\'{e}vy measure associated with the L\'{e}vy process.
As applications, the obtained result is applied to recover results in the literature and to obtain new results for the Gerber-Shiu function at ruin for risk processes embedded with a loss-carry-forward taxation system or a barrier dividend strategy. Moreover, numerical examples are provided to illustrate the results.
\end{abstract}

\begin{keyword}
Spectrally negative L\'{e}vy process; general drawdown time; generalized expected discounted penalty function; scale function; excursion theory.
\end{keyword}

\end{frontmatter}


\section{Introduction }
\setcounter{section}{1}

In the classical model of risk theory, the behavior of the insurer's risk process is analysed through the expected discounted penalty function, which is commonly referred to as Gerber-Shiu function in the ruin literature, see Gerber and Shiu (1998). Based on the Cram${\rm \acute{e}}$r-Lundberg model for the surplus process, they studied the joint distribution of three key quantities of interest: the probability of ruin, the distribution of surplus immediately prior to ruin and deficit at time of ruin. Thereafter the Gerber-Shiu function has been studied extensively in the literature. The diversity of techniques and perspectives under which this function was studied initiated a special issue of \emph{Insurance: Mathematics and Economics} on the topic of Gerber-Shiu functions in 2010.

Motivated by the recent development in risk measurements, this paper studies an extended definition of the classical Gerber-Shiu function. The main idea is to replace the time of ruin by a more general risk indicator: the time of drawdown, which is widely used in industry. Generally, a drawdown time refers to a moment when the surplus process declines from the peak to the subsequent trough during a specific recorded period of an investment, fund or commodity security. We adopt a general drawdown definition which includes not only the ruin time as a special case, but also many other forms (linear or non-linear), see Remark 1 and Remark 2 in Avram et al. (2017) for examples and their explanations. In terms of the surplus process of an insurer, the sooner the drawdown time occurs, the more risk the insurer is bearing. Accordingly, the surplus before drawdown and the surplus at drawdown also play an important role in determining an insurer's financial risk. The level of those surpluses can help manage the financial decision of the insurer. For example, an insurer may try to minimize the probability of drawdowns of 20\% or greater before increasing its premium rate to avoid even worse situitions. The mathematical formulation was first introduced by  Taylor (1975) who studied the maximum drawdown of a drifted Brownian motion. This result was later extended to other situations, see Avram et al. (2004), Landriault et al. (2015), Landriault et al. (2017), Li et al. (2017), Wang and Zhou (2018) and the references therein.

In the context of finance and actuarial studies, the application of drawdown risks has been flourishing in recent years. Shepp and Shiryaev (1993) proposed a new put option where the option buyer receives the maximum price that the option has ever traded from the purchase time and the exercise time. More recent applications in option pricing can be found in Avram et al. (2004) and Carr (2014). In terms of portfolio selection, Grossman and Zhou (1993) pioneered this research topic by adopting a strict drawdown constraint on the optimal investment strategy. Extended work along this line is abundant, to name a few but not limited to, see Cvitanic and Karatzas (1995) for a multi-asset framework, Cherny and Obloj (2013) for a general semimartingale framework, Roche (2006) for an optimal consumption-investment problem, and Elie and Touzi (2008) for the optimization over a general class of utility functions. Along another line in the portfolio selection, the probability of drawdown is minimized instead of imposing a constraint on drawdown. Various scenarios have been considered, see Chen et al. (2015) for a pure investment formulation, Angoshtari et al. (2016) for a case with constant consumption constraint, and Han et al. (2018) for an optimal reinsurance case. In terms of dividend optimization problems, Wang and Zhou (2018) considered a general version of de Finetti's optimal dividend problem in which the ruin time is replaced with a general drawdown time.

Another feature of this paper is the insurer's surplus process is modelled by a spectrally negative L\'{e}vy process, which is a stochastic process with stationary independent increments and with sample paths of no positive jumps. It often serves as a surplus process in risk theory, where the downward jumps describe the outgoing payments of claims. The application of spectrally negative L\'{e}vy processes in risk theory can be seen in Yang and Zhang (2001), Garrido and Morales (2006), Biffis and Morales (2010), Avarm et al. (2017) and Loeffen et al. (2018). Based on the time of drawdown, this paper studies an extended definition of the expected discounted penalty function in terms of the $q$-scale functions and the L\'{e}vy measure associated with the L\'{e}vy process. The joint distribution of the time of drawdown,
the running maximum at drawdown, the last minimum before drawdown,
the surplus before drawdown and the surplus at drawdown is derived. Unlike the time of ruin, the surplus process may not fall below zero at the time of drawdown, therefore, we use the surplus at drawdown instead of the deficit at ruin in the extended definition of Gerber-Shiu function.

From technical point of view, the classical ruin theory is mainly based on renewal equation techniques to obtain some delicate results, see Gerber and Shiu (1998), or some specific methods for some particular risk models such as a Gamma process, see Dufresne and Gerber (1993). When the risk process is extended to a spectrally negative L\'{e}vy process, the results on ruin probabilities follow from the fluctuation theory of L\'{e}vy processes, see Kyprianou (2006).
In the case of ruin, different mathematical subtleties from the fluctuation theory are used for specific model formulation, such as the Laplace exponent when the classical risk process and the gamma process are perturbed by diffusion, see Yang and Zhang (2001); the Laplace transform of the time to ruin when the aggregate claims process is a subordinator, see Garrido and Morales (2006); the overshoots results when the Parisian ruin problem is investigated, see Loeffen et al. (2018). The ruin probability is usually expressed in terms of a multi-fold convolution of some distribution functions, or the $q$-scale functions associated to the L\'{e}vy process.

The results of this paper (the case of drawdown) are derived based on the excursion-theoretical approach, 
which is also from the fluctuation theory of L\'{e}vy processes and has been proving its efficiency in solving the related boundary crossing problems. Using this approach, Kyprianou and Pistorius (2003)
derived the Laplace transform of a crossing time which is the key quantity to the evaluation of the
Russian option; Avram et al. (2004) determined the joint Laplace transform of the exit time and exit position from an interval containing the origin of the process reflected in its supremum, which is then applied to an optimal stopping problem associated with the pricing of Russian options and their Canadized versions; Pistorius (2004) derived the $q$-resolvent kernels for the L\'{e}vy process reflected at its supremum killed upon leaving $[0, a]$; Pistorius (2007) solved the problem of Lehoczky and the Skorokhod embedding problem for the L\'{e}vy process reflected at its supremum;
Baurdoux (2007) investigated the density of the resolvent measure of the killed L\'{e}vy process reflected at its infimum;
Kyprianou and Zhou (2009) obtained the generalized version of the Gerber-Shiu function for a taxed L\'{e}vy risk process where a loss-carry-forward taxation system is embedded.


The existing literature has been witnessing the applications of the powerful excursion-theoretical approach in the fields of financial mathematics and stochastic process theory. However, as far as the authors know, this approach is rarely used in the field of actuarial risk theory. This paper attempts to apply the approach to study the expression of the extended Gerber-Shiu function in the case of drawdown. One merit of applying the excursion-theoretical approach is, we do not need to use specific features of the underlying L\'{e}vy process except for a generic path decomposition
in terms of excursions from the running maximum, which falls into the framework of Poisson point process.
Thence we are allowed to use the theory of Poisson point process, such as the compensation formula, in the whole manipulation of our target problem. We mention that all results in this paper are expressed elegantly in terms of scale functions and the L\'{e}vy measure associated with the L\'{e}vy process. We also mention that, when the first passage over the non-constant general drawdown boundary is reduced to the case of constant boundary, we are able to recover the corresponding results in the existing literature.

We point out that Li et al. (2017) also applied the excursion approach to study the exit problems involving a general drawdown time for spectrally negative L\'{e}vy processes. However their focus was on the joint Laplace transform for the process at the drawdown time, while this paper aims to study the joint distribution involving general drawdown times. It is true that the joint distribution of random quantities is uniquely determined by the corresponding Laplace transform, and can be obtained by inverting the Laplace transform either analytically by the Bromwich integral or numerically. Unfortunately, many problems of mathematical interest or physical interest lead to Laplace transforms whose inverses are not readily expressed in terms of tabulated functions. And also, to the best of our knowledge, all the current numerical inversion methods are unstable in the sense that small \lq\lq input\rq\rq  errors, arising say from computer roundoff or from parameter selection inherent in the algorithm, can be disastrously magnified in the inversion, which should be avoided in some practical problems such as the survival probability of a population in a finite capacity environment or in tumor growth models, see Albano and Giorno (2006) for an example. Even with the development of the modern technology, it is hard to find a universal algorithm that works well to all the cases. A nice review of these methods can be found in Davies (2002). Therefore it is still worthwhile to study the joint distribution even when the Laplace transform is readily available.

Besides, in the computation of the joint distribution we add a constraint on the surplus level which is another difference between our paper and Li et al (2017). The motivation of this constraint comes from Biffis and Morales (2010) where the last minimum of the surplus before ruin was considered to better reflect the company's financial condition. In our paper we extend this terminology to the last minimum of the surplus before drawdown, which serves as a warning line of the company's financial activities. We refer to Remark 3.1 for more detailed interpretations.

This paper is organized as follows. Section 2 presents some preliminary facts on spectrally negative L{\'e}vy processes. The main results, proofs and discussions are provided in Section 3. In Section 4, the main results are applied to study the Gerber-Shiu function at ruin for L\'evy risk processes with tax and dividends.
Section 5 provides some numerical examples to illustrate our results. Section 6 concludes this paper.

\section{Preliminaries of spectrally negative L\'{e}vy process}
\setcounter{section}{2}

Write $X=\{X(t);t\geq0\}$, defined on a probability space with probability laws $\{\mathbb{P}_{x};x\in\mathbf{R}\}$ and natural filtration $\{\mathcal{F}_{t};t\geq0\}$, for a spectrally negative L\'{e}vy process.
We denote its running supremum and running infimum process, respectively, as $\{\overline{X}(t)=\sup\limits_{0\leq s\leq t}X(s);t\geq0\}$ and $\{\underline{X}(t)=\inf\limits_{0\leq s\leq t}X(s);t\geq0\}$.

A function $\xi$ defined on $(0,\infty)$ is called a general drawdown function if it is continuous and $\overline{\xi}(x)=x-\xi(x)>0$ for $x>0$.
The general drawdown time with respect to the general drawdown function $\xi(\cdot)$, also called the $\xi$-drawdown time for short, is defined as
$$\tau_{\xi}=\inf\{t\geq0;X(t)<\xi(\overline{X}(t))\}.$$
\begin{rem}
Note that when $\xi(\cdot)\equiv 0$, $\tau_{\xi}$ reduces to the classical ruin time. Another example is a linear function of the running supreme, say, $\xi(\overline{X}(t))=0.8 \overline{X}(t)-0.5$. Then $X(t)<\xi(\overline{X}(t))$ is equivalent to $0.8 \overline{X}(t)-X(t)>0.5$. Accordingly, $\tau_{\xi}$ refers to the first time the surplus process drops $0.5$ units below 80\% of its maximum to date. In practice, the risk manager can use it as a turning point of taking some actions, such as increasing the premium rate or negotiating by the company with the capital providers to prevent even worse situations. For non-linear forms of $\xi(\cdot)$, we refer to Remark 2 in Avram et al. (2017) for the examples and their explanations.
\end{rem}
We also define the first down-crossing time of level $a$ and up-crossing time of level $b$, respectively, as follows
\begin{eqnarray}
\tau_{a}^{-}:=\inf\{t\geq0;X(t)<a\}\,\,\,\text{and}\,\,\,\tau^+_{b}:=\inf\{t\geq0;X(t)>b\}.\nonumber
\end{eqnarray}

Let the Laplace exponent of $X$ be given by
\begin{eqnarray}
\psi(\theta)=\ln \mathbb{E}_{x}\left(\mathrm{e}^{\theta (X(1)-x)}\right)=\gamma\theta+\frac{1}{2}\sigma^{2}\theta^{2}-\int_{(0,\infty)}\left(1-\mathrm{e}^{-\theta x}-\theta x\mathbf{1}_{(0,1)}(x)\right)\upsilon(\mathrm{d}x),\nonumber
\end{eqnarray}
where $\upsilon$ is  the L\'{e}vy measure satisfying $\int_{(0,\infty)}\left(1\wedge x^{2}\right)\upsilon(\mathrm{d}x)<\infty$.
It is known that $\psi(\theta)$ is finite for  $\theta\in[0,\infty)$ in which case it is strictly convex and infinitely differentiable.
As in Bertoin (1996), the $q$-scale functions $\{W_{q};q\geq0\}$ of $X$ are defined as follows. For each $q\geq0$, $W_{q}:\,[0,\infty)\rightarrow[0,\infty)$ is the unique strictly increasing and continuous function with Laplace transform
\begin{eqnarray}
\int_{0}^{\infty}\mathrm{e}^{-\theta x}W_{q}(x)\mathrm{d}x=\frac{1}{\psi(\theta)-q},\quad \mbox{for }\theta>\Phi_{q},\nonumber
\end{eqnarray}
where $\Phi_{q}$ is the largest solution of the equation $\psi(\theta)=q$. Further define $W_{q}(x)=0 $ for $x<0$, and write $W$ for short for the $0$-scale function $W_{0}$.
\begin{rem}
Scale functions appear in the vast majority of known identities concerning boundary crossing problems and related path decompositions.
This in turn has consequences for their use in a number of classical applied probability models which rely heavily on such identities. We refer to Kuznetsov et al. (2012) for intuitive examples and explanations. We have to point out that for all spectrally negative L\'{e}vy processes, $q$-scale functions exist for all $q\geq 0$. However, since the scale function is defined via its Laplace transform, and in most cases it is not possible to find an explicit expression for the inverse of a Laplace transform, hence the expression in terms of the $q$-scale functions can be called semi-explicit expression. For those who cannot find explicit expressions, we resort to numerical methods which allow one to compute scale functions easily and efficiently, see also Kuznetsov et al. (2012).
\end{rem}

We  also briefly recall concepts in excursion theory for the reflected process $\{\overline{X}(t)-X(t);t\geq0\}$, and we refer to Bertoin (1996) for more details.
The process $\{L(t):= \overline{X}(t)-x, t\geq0\}$ serves as a local time at $0$ for
the Markov process $\{\overline{X}(t)-X(t);t\geq0\}$ under $\mathbb{P}_{x}$.
Let the corresponding inverse local time be defined as
$$L^{-1}(t):=\inf\{s\geq0\mid L(s)>t\}=\sup\{s\geq0\mid L(s)\leq t\}.$$
Let further $L^{-1}(t-)=\lim\limits_{s\uparrow t}L^{-1}(s)$.
The Poisson point process of excursions indexed by this local time is  denoted by $\{(t, \varepsilon_{t}); t\geq0\}$, where
$$\varepsilon_{t}(s):=X(L^{-1}(t))-X(L^{-1}(t-)+s), \,\,s\in(0,L^{-1}(t)-L^{-1}(t-)],$$
whenever $L^{-1}(t)-L^{-1}(t-)>0$.
For the case of $L^{-1}(t)-L^{-1}(t-)=0 $, define $\varepsilon_{t}=\Upsilon$ with $\Upsilon$ being an additional isolated point.
Accordingly, we denote a generic excursion as $\varepsilon(\cdot)$
(or, $\varepsilon$ for short) belonging to the space $\mathcal{E}$ of canonical excursions.
The intensity measure of the process $\{(t, \varepsilon_{t}); t\geq0\}$ is given by $\mathrm{d}t\times \mathrm{d}n$ where \lq\lq$n$\rq\rq
 is a measure on the space of excursions.
The lifetime of a canonical excursion $\varepsilon$ is denoted by $\zeta$, and its excursion height is
denoted by $\overline{\varepsilon}=\sup\limits_{t\in[0,\zeta]}\varepsilon(t)$. The first passage time of a canonical excursion $\varepsilon$ will be defined
by
\begin{eqnarray}\label{exc.upcro.b}
\rho_{b}^{+}:=\inf\{t\in[0,\zeta];\varepsilon(t)>b\},
\end{eqnarray}
with the convention $\inf\emptyset=\zeta$. In addition, define
\begin{eqnarray}
\label{exc.height.b}
\overline{\alpha}_{b}:=\sup_{t\in[0,\rho_{b}^{+})}\varepsilon(t),
\end{eqnarray}
which is the excursion height prior to $\rho_{b}^{+}$.

\begin{rem}
An excursion is a segment of the path that has zero value only at its two endpoints. It refers to a maximal open time interval such that the path is away from 0. Naturally, a L\'{e}vy process can be decomposed into a sequence of excursions. Due to the stochastic nature, say, a Brownian motion (the only continuous L\'{e}vy process) started from zero can hit zero infinitely often in any time interval, then the excursions are labelled by local times rather than by the starting time of a particular excursion. Considering there are only countably many excursions, hence there are only countably many local times which pertain to an excursion. This motivates the idea of taking the set of excursions as a Poisson Point Process on local times. Therefore, a measure can be defined to describe the intensity of the Poisson point process of excursions, which is the intuitive meaning of measure \lq\lq$n$\rq\rq in our setting. For more detailed explanations, we refer to Kyprianou (2006).
\end{rem}

\section{Main results}
\setcounter{section}{3}

This section introduces the extended expected discounted penalty function at the general drawdown time for a L\'{e}vy risk process, then express it in terms of the $q$-scale functions and the L\'{e}vy measure associated with the L\'{e}vy process.

The following three technical lemmas turn out to be helpful when we derive the main results.
Lemma \ref{lemma2} characterizes the atom at $0$ of the discounted distribution law of the
overshoot at first up-crossing time of a canonical excursion $\varepsilon$ with respect to the excursion measure.
We present a proof here for self-completeness. 

\medskip
\begin{lem}\label{lemma2}
For $q>0$ and $s>x\geq 0$, we have
\begin{eqnarray}\label{impo.iden.for.n.2}
\hspace{-0.3cm}&&\hspace{-0.3cm}n\left(\mathrm{e}^{-q \rho^{+}_{s}}; \,\overline{\varepsilon}>s,\,\varepsilon(\rho^{+}_{s})=s\right)
=
\frac{\sigma^{2}}{2}\left(\frac{\left(W_{q}^{\prime}(s)\right)^{2}}{W_{q}(s)}-W_{q}^{\prime\prime}(s)\right).
\end{eqnarray}
\end{lem}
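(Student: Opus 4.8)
The plan is to read the event $\{\overline{\varepsilon}>s,\,\varepsilon(\rho_s^+)=s\}$ as a \emph{creeping} event. Since the generic excursion $\varepsilon$ records the drawdown $\overline{X}-X$ away from the running maximum and $X$ is spectrally negative, $\varepsilon$ has only upward jumps; the event $\{\varepsilon(\rho_s^+)=s\}$ therefore says that the excursion passes the level $s$ \emph{continuously}, i.e. that $X$ creeps downward by exactly $s$ below the maximum from which the excursion is issued. For a spectrally negative process continuous downward passage can only happen through the Gaussian component, which is precisely why the coefficient $\sigma^2/2$ must surface. Accordingly my strategy is to realise the left-hand side as the infinitesimal rate (in local time) of a two-sided downward creeping probability of $X$, and then to feed in the known scale-function formula for that probability.

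Concretely, I would work under $\mathbb{P}_s$ with $X(0)=\overline{X}(0)=s$ and consider, for small $\epsilon>0$, the discounted probability $\mathbb{E}_s(\mathrm{e}^{-q\tau_0^-};X(\tau_0^-)=0,\tau_0^-<\tau_{s+\epsilon}^+)$ that $X$ creeps down to $0$ before its maximum exceeds $s+\epsilon$. Decomposing this event over the excursions of $\overline{X}-X$ indexed by the local time $L=\overline{X}-s\in[0,\epsilon]$, the creeping must occur inside a single ``fatal'' excursion issued from some maximum $m=s+\ell$, $\ell\in[0,\epsilon]$, which creeps to depth exactly $m$; by the compensation formula for the Poisson point process of excursions this expectation equals
\[
\mathbb{E}\Big[\int_0^\epsilon \mathbf{1}_{\{\text{no fatal excursion before }\ell\}}\,\mathrm{e}^{-qL^{-1}(\ell-)}\,n\big(\mathrm{e}^{-q\rho_{s+\ell}^+};\overline{\varepsilon}>s+\ell,\varepsilon(\rho_{s+\ell}^+)=s+\ell\big)\,\dd\ell\Big].
\]
As $\epsilon\downarrow0$ the climbing time $L^{-1}(\ell-)\le\tau_{s+\epsilon}^+\to0$, the ``no fatal excursion'' indicator tends to $1$, and the inner $n$-integral is continuous in $\ell$, so dividing by $\epsilon$ and letting $\epsilon\downarrow0$ isolates exactly the target quantity:
\[
n\big(\mathrm{e}^{-q\rho_s^+};\overline{\varepsilon}>s,\varepsilon(\rho_s^+)=s\big)=\lim_{\epsilon\downarrow0}\frac1\epsilon\,\mathbb{E}_s\big(\mathrm{e}^{-q\tau_0^-};X(\tau_0^-)=0,\tau_0^-<\tau_{s+\epsilon}^+\big).
\]

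It then remains to insert the classical two-sided creeping identity for spectrally negative processes,
\[
\mathbb{E}_s\big(\mathrm{e}^{-q\tau_0^-};X(\tau_0^-)=0,\tau_0^-<\tau_{s+\epsilon}^+\big)=\frac{\sigma^2}{2}\Big(W_q'(s)-\frac{W_q'(s+\epsilon)}{W_q(s+\epsilon)}W_q(s)\Big),
\]
which comes from the creeping atom in the two-sided downward exit problem (and can be checked against the limits $s\to0$ and $\epsilon\to0$). Differentiating the bracket in $\epsilon$ at $\epsilon=0$ gives $-W_q(s)\,\tfrac{\dd}{\dd s}\big(W_q'(s)/W_q(s)\big)=(W_q'(s))^2/W_q(s)-W_q''(s)$, whence
\[
n\big(\mathrm{e}^{-q\rho_s^+};\overline{\varepsilon}>s,\varepsilon(\rho_s^+)=s\big)=\frac{\sigma^2}{2}\Big(\frac{(W_q'(s))^2}{W_q(s)}-W_q''(s)\Big),
\]
as claimed.

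The main obstacle will be the rigorous justification of the two displayed excursion identities and the passage to the limit: one must verify that, as $\epsilon\downarrow0$, the contribution of excursions issued from maxima $m\in(s,s+\epsilon)$ (whose required creeping depth $m$ drifts away from $s$) and the discounting by the climbing time $L^{-1}(\ell-)$ both drop out to leading order, and that two or more fatal excursions contribute only at order $\epsilon^2$ by the Poissonian structure. This is exactly where the compensation formula and the continuity of $s\mapsto n(\mathrm{e}^{-q\rho_s^+};\,\cdots)$ enter. A secondary point is to have the two-sided creeping formula available in the stated scale-function form; given that, the remaining step is the elementary differentiation above.
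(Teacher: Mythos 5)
Your proposal is correct and follows essentially the same route as the paper, so I will only record the one point where the two arguments diverge. The paper starts from the \emph{one-sided} creeping identity of Pistorius (2007), $\mathbb{E}_x\left(\mathrm{e}^{-q\tau_0^-};X(\tau_0^-)=0\right)=\tfrac{\sigma^2}{2}\left(W_q'(x)-\Phi_q W_q(x)\right)$, rewrites its left-hand side via the compensation formula as $\int_x^\infty \tfrac{W_q(x)}{W_q(s)}\,n\left(\mathrm{e}^{-q\rho_s^+};\overline{\varepsilon}>s,\varepsilon(\rho_s^+)=s\right)\mathrm{d}s$, and then differentiates the resulting identity in the starting point $x$. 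You instead use the \emph{two-sided} creeping identity with upper barrier $s+\epsilon$, apply the identical compensation-formula decomposition over local times in $[0,\epsilon]$ (where the prefactor $\mathbb{E}_s(\mathrm{e}^{-qL^{-1}(\ell-)}\prod_{t<\ell}\mathbf{1}_{\{\overline{\varepsilon}_t\le s+t\}})$ is exactly $W_q(s)/W_q(s+\ell)\to 1$), and differentiate in the barrier $\epsilon$ at $0$. The two computations are equivalent: the paper's one-sided formula is the $\epsilon\to\infty$ limit of yours, and both differentiations land on $(W_q'(s))^2/W_q(s)-W_q''(s)$. Both versions also need the same regularity input, namely continuity of $s\mapsto n\left(\mathrm{e}^{-q\rho_s^+};\cdots\right)$, which you correctly flag; your localized variant has the minor advantage of isolating $n$ at the single level $s$ directly rather than first proving an identity for all $x$ and then differentiating.
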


\begin{proof}[Proof\emph{:}]\,\,\,
From (14) of Pistorius (2007), we read off that
\begin{eqnarray}\label{renewal.iden.}
\mathbb{E}_{x}\left(\mathrm{e}^{-q \tau^{-}_{0}}; \,X(\tau^{-}_{0})=0\right)
\hspace{-0.3cm}&=&\hspace{-0.3cm}
\mathbb{E}\left(\mathrm{e}^{-q \tau^{-}_{-x}}; \,X(\tau^{-}_{-x})=-x\right)
\nonumber\\
\hspace{-0.3cm}&=&\hspace{-0.3cm}
\frac{\sigma^{2}}{2}\left(W_{q}^{\prime}(x)-\Phi_{q}W_{q}(x)\right).
\end{eqnarray}
By the definition of ruin, one knows that there must exist an excursion with a positive lifetime, such that $\tau_{0}^{-}$ lies in between the left and right end points of this excursion. Denote this excursion by $\varepsilon_{\theta}$ with $\theta\geq0$, then $\varepsilon_{\theta}$ is the last excursion whose left-end point is less than $\tau_{0}^{-}$, and we have $\overline{\varepsilon}_{\theta}>x+\theta$ and $\overline{\varepsilon}_{t}\leq x+t$ for all excursions $\varepsilon_{t}$ with $t<\theta$. That is, ruin does not occur during the lifetime of $\varepsilon_{t}$ with $t<\theta$, while it does occur during the lifetime of $\varepsilon_{\theta}$.
Furthermore, one can translate the ruin time and the surplus at ruin time by the excursion $\varepsilon_{\theta}$, respectively, through
$$\tau_{0}^{-}=L^{-1}(\theta-)+\rho_{x+\theta}^{+}(\theta), \quad X(\tau_{0}^{-})=x+\theta-\varepsilon_{\theta}\left(\rho_{x+\theta}^{+}(\theta)\right),$$
where $\rho_{x+\theta}^{+}(\theta)$ is defined via (\ref{exc.upcro.b}) with $\varepsilon$ replaced by $\varepsilon_{\theta}$ and $b$ replaced by $x+\theta$. Therefore, the left hand side of (\ref{renewal.iden.}) can be translated as
\begin{eqnarray}\label{renewal.iden.via.n}
\hspace{-0.3cm}&&\hspace{-0.3cm}
\mathbb{E}_{x}\left(\mathrm{e}^{-q \tau^{-}_{0}}; \,X(\tau^{-}_{0})=0\right)
\nonumber\\
\hspace{-0.3cm}&=&\hspace{-0.3cm}
\mathbb{E}_{x}\left(\sum_{\theta\geq0}\mathrm{e}^{-qL^{-1}(\theta-)}\mathbf{1}_{\{L^{-1}(\theta-)<\tau_{0}^{-}\}}
\mathrm{e}^{-q \left(\tau_{0}^{-}-L^{-1}(\theta-)\right)}\mathbf{1}_{\{S\left(L^{-1}(\theta-)\right)=\overline{X}(\tau_{0}^{-})=x+\theta\}} \mathbf{1}_{\{X(\tau_{0}^{-})=0\}}\right)
\nonumber\\
\hspace{-0.3cm}&=&
\hspace{-0.3cm}
\mathbb{E}_{x}\left(\sum_{\theta\geq0}\mathrm{e}^{-qL^{-1}(\theta-)}
\prod_{t<\theta}\mathbf{1}_{\{\overline{\varepsilon}_{t}<x+t\}}
\mathrm{e}^{-q \rho_{x+\theta}^{+}(\theta)}
\mathbf{1}_{\{\overline{\varepsilon}_{\theta}>x+\theta\}}
 \mathbf{1}_{\{\varepsilon_{\theta}\left(\rho_{x+\theta}^{+}(\theta)\right)=x+\theta\}}\right),
\end{eqnarray}
where $\varepsilon_{\theta}$ represents the last excursion prior to $\tau_{0}^{-}$, and $\rho_{x+\theta}^{+}(\theta)$ is given by (\ref{exc.upcro.b}) with $\varepsilon$ replaced by $\varepsilon_{\theta}$ and $b$ replaced by $x+\theta$.
By the compensation formula (cf., Bertoin (1996)) in the excursion theory together with (\ref{renewal.iden.via.n}), we obtain
\begin{eqnarray}\label{alter.renewal.iden.}
\hspace{-0.3cm}&&\hspace{-0.3cm}
E_{x}\left(\mathrm{e}^{-q \tau^{-}_{0}}; \,X(\tau^{-}_{0})=0\right)
\nonumber\\
\hspace{-0.3cm}&=&
\hspace{-0.3cm}
\int_{0}^{\infty}\mathbb{E}_{x}\left(\mathrm{e}^{-qL^{-1}(\theta-)}
\prod_{t<\theta}\mathbf{1}_{\{\overline{\varepsilon}_{t}<x+t\}}\right)
n\left(\mathrm{e}^{-q \rho_{x+\theta}^{+}(\theta)}
\mathbf{1}_{\{\overline{\varepsilon}_{\theta}>x+\theta,\,\varepsilon_{\theta}\left(\rho_{x+\theta}^{+}(\theta)\right)=x+\theta\}}\right)
\mathrm{d}\theta
\nonumber\\
\hspace{-0.3cm}&=&
\hspace{-0.3cm}
\int_{0}^{\infty}\mathbb{E}_{x}\left(\mathrm{e}^{-q\tau_{x+\theta}^{+}}
\mathbf{1}_{\{\tau_{x+\theta}^{+}<\tau_{0}^{-}\}}\right)
n\left(\mathrm{e}^{-q \rho_{x+\theta}^{+}}
\mathbf{1}_{\{\overline{\varepsilon}>x+\theta,\,\varepsilon\left(\rho_{x+\theta}^{+}\right)=x+\theta\}}\right)
\mathrm{d}\theta
\nonumber\\
\hspace{-0.3cm}&=&
\hspace{-0.3cm}
\int_{x}^{\infty}\frac{W^{(q)}(x)}{W^{(q)}(s)}\,
n\left(\mathrm{e}^{-q \rho_{s}^{+}}
\mathbf{1}_{\{\overline{\varepsilon}>s,\,\varepsilon\left(\rho_{s}^{+}\right)=s\}}\right)
\mathrm{d}s
.
\end{eqnarray}
Equating (\ref{renewal.iden.}) and (\ref{alter.renewal.iden.}) and then differentiating both side of the resulting equation with respect to $x$ yields
(\ref{impo.iden.for.n.2}).
\end{proof}

\medskip
We recall from (\ref{exc.upcro.b}) and (\ref{exc.height.b}) that, $\rho_{x}^{+}$ and $\overline{\alpha}_{x}$ refer respectively to the first passage time of the excursion $\varepsilon$ over $x$ and the excursion height prior to $\rho_{x}^{+}$. The following result gives the discounted joint distribution involving $\rho_{x}^{+}$ and $\overline{\alpha}_{x}$ under the excursion measure. It generalizes Lemma 2.2 of Kyprianou and Zhou (2009) by imposing a constraint on the excursion height
 $\overline{\alpha}_{x}\leq x-v$, where $v\in(0,x]$ can be taken as the minimum capital requirement on the last minimum of the surplus before ruin.

\begin{lem}\label{lemma1}
For $x, y, z\in (0,+\infty)$ and $v\in(0,x]$, we have
\begin{eqnarray}\label{v.plu.}
\hspace{-0.3cm}&&
\hspace{-0.3cm}
n\left(\mathrm{e}^{-q \rho_{x}^{+}}
; x-\varepsilon\left(\rho_{x}^{+}-\right)\in\mathrm{d}y,
\,\varepsilon\left(\rho_{x}^{+}\right)-x\in\mathrm{d}z
,\,\overline{\alpha}_{x}\leq x-v
\right)
\nonumber\\
\hspace{-0.3cm}&=&
\hspace{-0.3cm}
\left(W_{q}^{\prime}(x-y)-W_{q}(x-y)\frac{W_{q}^{\prime}(x-v)}{W_{q}(x-v)}\right)
\upsilon(\mathrm{d}z+y)\,\mathbf{1}_{\{y<x\}}\,\mathrm{d}y
\nonumber\\
\hspace{-0.3cm}&&\hspace{-0.3cm}
+\,W_{q}(0)\,
\upsilon(\mathrm{d}z+x)\,
\delta_{x}(\mathrm{d}y),
\end{eqnarray}
where $\delta_{x}(\mathrm{d}y)$ is the Dirac measure which assigns unit mass to the point $x$.
\end{lem}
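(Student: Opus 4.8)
The plan is to read off the identity (\ref{v.plu.}) from the structure of the excursion's first passage over $x$, combining the compensation formula with a two-sided exit (resolvent) computation. Since $X$ is spectrally negative, the reflected process $\overline{X}-X$, and hence a generic excursion $\varepsilon$, has only upward jumps, inherited from the downward jumps of $X$ governed by $\upsilon$. Consequently, on the event $\{\varepsilon(\rho_{x}^{+})-x\in\mathrm{d}z\}$ with $z>0$, the first passage of $\varepsilon$ strictly above $x$ can happen only by a jump; the continuous (creeping) passage $\varepsilon(\rho_{x}^{+})=x$ is precisely the complementary event treated in Lemma \ref{lemma2}, which carries the $\sigma^{2}$ term and no overshoot. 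Writing the pre-passage level as $\varepsilon(\rho_{x}^{+}-)=x-y$, the jump has size $(x+z)-(x-y)=y+z$, which is exactly why the L\'evy measure enters as $\upsilon(\mathrm{d}z+y)$. The constraint $\overline{\alpha}_{x}\le x-v$ forces the excursion to stay below the level $x-v$ throughout $[0,\rho_{x}^{+})$; in particular the pre-jump level obeys $x-y\le x-v$, so the bulk term is naturally supported on $y\in[v,x)$.

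First I would apply the compensation formula to the Poisson point process of jumps occurring along the excursion under $n$. This disintegrates the target measure into the product of (i) the discounted occupation density of $\varepsilon$ at the pre-jump level $x-y$ while $\varepsilon$ remains confined to $[0,x-v]$ and before $\rho_{x}^{+}$, and (ii) the jump intensity $\upsilon(\mathrm{d}z+y)$ for a jump carrying $\varepsilon$ from $x-y$ to above $x$. The mass the excursion spends at interior levels $x-y\in(0,x-v]$ produces the absolutely continuous term in $\mathrm{d}y$, whereas the possibility (for paths of bounded variation) that the excursion jumps over $x$ at its very inception, i.e.\ from the starting level $0$, produces the atom $\delta_{x}(\mathrm{d}y)$ whose weight is the occupation mass at $0$.

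The remaining and principal step is to evaluate these two occupation quantities in terms of scale functions. For this I would invoke the $q$-resolvent density of the reflected process killed on leaving $[0,x-v]$ (Pistorius (2004)), together with the excursion-theoretic identity relating the reflected-process resolvent to the local-time-normalized occupation measure under $n$. This identifies the interior occupation density at level $x-y$ as $W_{q}^{\prime}(x-y)-W_{q}(x-y)\frac{W_{q}^{\prime}(x-v)}{W_{q}(x-v)}$, which indeed vanishes at the top level $x-v$ (that is, at $y=v$), and the boundary occupation mass at $0$ as $W_{q}(0)$. Substituting these into the disintegration above yields (\ref{v.plu.}).

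The hard part will be this last step: producing the exact prefactors, in particular the entrance behaviour of the excursion (the derivative-type factor $W_{q}^{\prime}$ in the bulk and the atom weight $W_{q}(0)$ at the origin), and making rigorous the passage between the reflected-process resolvent and the excursion measure $n$. An alternative route, parallel to the proof of Lemma \ref{lemma2}, is to start from a known triple law for $X$ at first passage below a level under a running-supremum constraint, expand its left-hand side over the excursion straddling that passage time via the compensation formula to obtain an integral transform of the unknown $n(\cdots)$, and then invert by differentiation. As a consistency check, letting $v\downarrow 0$ removes the constraint and recovers Lemma 2.2 of Kyprianou and Zhou (2009).
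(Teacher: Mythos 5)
Your reading of where each ingredient comes from is right (the overshoot $y+z$ explaining $\upsilon(\mathrm{d}z+y)$, the atom at $y=x$ as a jump from the excursion's starting level $0$ with weight $W_{q}(0)$, the support restriction $y\ge v$ forced by $\overline{\alpha}_{x}\le x-v$), and the "alternative route" you sketch in your last paragraph is in fact precisely the paper's proof: the paper starts from the quadruple law of Biffis and Kyprianou (2010) for $\left(\tau_{0}^{-},X(\tau_{0}^{-}-),X(\tau_{0}^{-}),\underline{X}(\tau_{0}^{-}-)\right)$ — note the constraint is on the running \emph{infimum} (last minimum before ruin), not the running supremum as you wrote — integrates the last-minimum variable over $[v,x\wedge y]$, rewrites the left-hand side via the excursion straddling $\tau_{0}^{-}$, applies the compensation formula together with the two-sided exit identity $\mathbb{E}_{x}\left(\mathrm{e}^{-q\tau_{s}^{+}}\mathbf{1}_{\{\tau_{s}^{+}<\tau_{v}^{-}\}}\right)=W_{q}(x-v)/W_{q}(s-v)$, and then extracts $n(\cdots)$ by matching integrands in the upper level $s$ (i.e.\ by differentiation in $x$). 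Your primary route — applying the compensation formula to the jumps of $\varepsilon$ under $n$ and multiplying the jump intensity by the $q$-occupation measure of the excursion killed upon exceeding $x-v$ — is a genuinely different and standard strategy, but as written it has a real gap: the decisive quantitative input, namely
\begin{equation*}
n\left(\int_{0}^{\zeta\wedge\rho_{x-v}^{+}}\mathrm{e}^{-qt}\,\mathbf{1}_{\{\varepsilon(t)\in\mathrm{d}w\}}\,\mathrm{d}t\right)
=\left(W_{q}^{\prime}(w)-W_{q}(w)\frac{W_{q}^{\prime}(x-v)}{W_{q}(x-v)}\right)\mathrm{d}w+W_{q}(0)\,\delta_{0}(\mathrm{d}w),
\end{equation*}
is asserted rather than derived, and you acknowledge you do not know how to produce the prefactors or the normalization linking the killed reflected-process resolvent of Pistorius (2004) to $n$. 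Since those prefactors are exactly the content of the lemma, the primary route as it stands presupposes what it must prove. The trade-off between the two approaches is clear: the direct route keeps everything at the level of $n$ and makes the probabilistic meaning of each factor transparent, but requires an independent computation of the excursion resolvent; the paper's route sidesteps that by importing a known fluctuation identity for $X$ itself and inverting an integral transform, at the cost of being less structurally illuminating. Your consistency check ($v\downarrow0$ recovering Lemma 2.2 of Kyprianou and Zhou (2009)) is correct.
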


\begin{proof}
[Proof\emph{:}]\,\,\,
Recall $\underline{X}(t)=\inf\limits_{0\leq s\leq t}X(s)$ for $t\geq0$, then $\underline{X}(\tau_{0}^{-}-)$ refers to the last minimum of the surplus before ruin, see Biffis and Morales (2010). By Theorem 1 in Biffis and Kyprianou (2010) one has
\begin{eqnarray}\label{}
\hspace{-0.3cm}&&\hspace{-0.3cm}
\mathbb{E}_{x}\left(\mathrm{e}^{-q\tau_{0}^{-}}; X(\tau_{0}^{-}-)\in\mathrm{d}y,
-X(\tau_{0}^{-})\in\mathrm{d}z,
\underline{X}(\tau_{0}^{-}-)\in\mathrm{d}v\right)
\nonumber\\
\hspace{-0.3cm}&=&\hspace{-0.3cm}
\mathrm{e}^{-\Phi_{q}(y-v)}\left(W_{q}^{\prime}(x-v)-\Phi_{q}W_{q}(x-v)\right)
\upsilon(\mathrm{d}z+y)\,\mathrm{d}v\,\mathrm{d}y, \quad x, y, z\in (0,+\infty),v\in(0,x\wedge y],\nonumber
\end{eqnarray}
which yields
\begin{eqnarray}\label{Bf.Ky.}
\hspace{-0.3cm}&&\hspace{-0.3cm}
\mathbb{E}_{x}\left(\mathrm{e}^{-q\tau_{0}^{-}}; X(\tau_{0}^{-}-)\in\mathrm{d}y,
-X(\tau_{0}^{-})\in\mathrm{d}z,
\underline{X}(\tau_{0}^{-}-)\geq v\right)
\nonumber\\
\hspace{-0.3cm}&=&\hspace{-0.3cm}
\int_{w\in[v,x\wedge y]}\mathrm{e}^{\Phi_{q}w}\left(W_{q}^{\prime}(x-w)-\Phi_{q}W_{q}(x-w)\right)\mathrm{d}w\,
\mathrm{e}^{-\Phi_{q}y}\upsilon(\mathrm{d}z+y)\,\mathrm{d}y
\nonumber\\
\hspace{-0.3cm}&=&\hspace{-0.3cm}
\mathrm{e}^{-\Phi_{q}y}\,\upsilon(\mathrm{d}z+y)\,\mathrm{d}y
\,\left(\mathbf{1}_{\{y< x\}}\int_{w\in[v,y]}
\mathrm{d}\left(-\,\mathrm{e}^{\Phi_{q}w}W_{q}(x-w)\right)
+\mathbf{1}_{\{y\geq x\}}\left(\int_{w\in[v,x)}
\mathrm{d}\left(-\,\mathrm{e}^{\Phi_{q}w}W_{q}(x-w)\right)
\right.\right.
\nonumber\\
\hspace{-0.3cm}&&\hspace{-0.3cm}
\left.\left.+\int_{w\in\{x\}}
\mathrm{d}\left(-\,\mathrm{e}^{\Phi_{q}w}W_{q}(x-w)\right)
\right)\right)
\nonumber\\
\hspace{-0.3cm}&=&\hspace{-0.3cm}
\mathrm{e}^{-\Phi_{q}y}\,
\upsilon(\mathrm{d}z+y)\,\mathrm{d}y
\,\left(\mathbf{1}_{\{y< x\}}\,\left(\mathrm{e}^{\Phi_{q}v}W_{q}(x-v)-\mathrm{e}^{\Phi_{q}y}W_{q}(x-y)
\right)\right.
\nonumber\\
\hspace{-0.3cm}&&\hspace{-0.3cm}
\left.+\mathbf{1}_{\{y\geq x\}}\,\left(\mathrm{e}^{\Phi_{q}x}W_{q}(0)+\left(\mathrm{e}^{\Phi_{q}v}W_{q}(x-v)
-\mathrm{e}^{\Phi_{q}x}W_{q}(0)
\right)\right)\right)
,\quad x,y,z\in(0,\infty),v\in(0,x\wedge y].
\end{eqnarray}

By the same language of excursions as in \eqref{renewal.iden.via.n}, we rewrite \eqref{Bf.Ky.} as
\begin{eqnarray}\label{B10.v}
\hspace{-0.3cm}&&\hspace{-0.3cm}
\mathbb{E}_{x}\left(\mathrm{e}^{-q\tau_{0}^{-}}; \,X(\tau_{0}^{-}-)\in\mathrm{d}y,
\,-X(\tau_{0}^{-})\in\mathrm{d}z,\,\underline{X}(\tau_{0}^{-}-)\geq v\right)
\nonumber\\
\hspace{-0.3cm}&=&
\hspace{-0.3cm}
\mathbb{E}_{x}\left(\sum_{\theta\geq0}\mathrm{e}^{-qL^{-1}(\theta-)}
\prod_{t<\theta}\mathbf{1}_{\{\overline{\varepsilon}_{t}< x+t-v\}}
\mathrm{e}^{-q \rho_{ x+\theta}^{+}(\theta)}
\mathbf{1}_{\{\overline{\varepsilon}_{\theta}> x+\theta\}}\right.
\nonumber\\
\hspace{-0.3cm}&&
\hspace{-0.3cm}
 \left.\times\mathbf{1}_{\{x+\theta-\varepsilon_{\theta}\left(\rho_{ x+\theta}^{+}(\theta)-\right)\in\mathrm{d}y,
\,-\left(x+\theta-\varepsilon_{\theta}\left(\rho_{ x+\theta}^{+}(\theta)\right)\right)\in\mathrm{d}z,
\,\overline{\alpha}_{x+\theta}(\theta)\leq x+\theta-v\}}\right)
,\quad x,y,z\in(0,\infty),v\in(0,x\wedge y],
\end{eqnarray}
where $\varepsilon_{\theta}$ represents the last excursion prior to $\tau_{0}^{-}$, $\rho_{ x+\theta}^{+}(\theta)$ and $\overline{\alpha}_{x+\theta}(\theta)$ are defined by (\ref{exc.upcro.b}) and \eqref{exc.height.b} with $\varepsilon$ replaced by $\varepsilon_{\theta}$ and $b$ replaced by $ x+\theta$, and
$$\underline{X}(\tau_{0}^{-}-)=\inf_{0\leq t<\theta}\left(x+t-\overline{\varepsilon}_{t}\right)\wedge \left(x+\theta-\overline{\alpha}_{x+\theta}(\theta)\right).$$
By \eqref{B10.v}, the well known two-sided exit identity (see, Kyprianou (2006)),
$$\mathbb{E}_{x}\left(\mathrm{e}^{-q\tau_{s}^{+}}
\mathbf{1}_{\{\tau_{s}^{+}<\tau_{v}^{-}\}}\right)=\frac{W_{q}(x-v)}{W_{q}(s-v)},\quad 0<v<x<s<\infty,$$
and the compensation formula (cf., Bertoin (1996)) in excursion theory, one can obtain
\begin{eqnarray}\label{B1}
\hspace{-0.3cm}&&\hspace{-0.3cm}
\mathbb{E}_{x}\left(\mathrm{e}^{-q\tau_{0}^{-}}; \,X(\tau_{0}^{-}-)\in\mathrm{d}y,
\,-X(\tau_{0}^{-})\in\mathrm{d}z,\,\underline{X}(\tau_{0}^{-}-)\geq v\right)
\nonumber\\
\hspace{-0.3cm}&=&
\hspace{-0.3cm}
\int_{0}^{\infty}\mathbb{E}_{x}\left(\mathrm{e}^{-qL^{-1}(\theta-)}
\prod_{t<\theta}\mathbf{1}_{\{\overline{\varepsilon}_{t}< x+t-v\}}\right)
\,\times n\left(\mathrm{e}^{-q \rho_{ x+\theta}^{+}(\theta)}
\mathbf{1}_{\{\overline{\varepsilon}_{\theta}> x+\theta\}}\right.
\nonumber\\
\hspace{-0.3cm}&&
\hspace{1.2cm}
\times\left.
 \mathbf{1}_{\{x+\theta-\varepsilon_{\theta}\left(\rho_{ x+\theta}^{+}(\theta)-\right)\in\mathrm{d}y,
\,-\left(x+\theta-\varepsilon_{\theta}\left(\rho_{ x+\theta}^{+}(\theta)\right)\right)\in\mathrm{d}z,\,\overline{\alpha}_{x+\theta}(\theta)\leq x+\theta-v\}}\right)
\mathrm{d}\theta
\nonumber\\
\hspace{-0.3cm}&=&
\hspace{-0.3cm}
\int_{0}^{\infty}\mathbb{E}_{x}\left(\mathrm{e}^{-q\tau_{x+\theta}^{+}}
\mathbf{1}_{\{\tau_{x+\theta}^{+}<\tau_{v}^{-}\}}\right)
\,n\left(\mathrm{e}^{-q \rho_{ x+\theta}^{+}}
\mathbf{1}_{\{\overline{\varepsilon}> x+\theta\}}
 \mathbf{1}_{\{x+\theta-\varepsilon\left(\rho_{ x+\theta}^{+}-\right)\in\mathrm{d}y,
\,\varepsilon\left(\rho_{ x+\theta}^{+}\right)-\left(x+\theta\right)\in\mathrm{d}z
,\,\overline{\alpha}_{x+\theta}\leq x+\theta-v\}}\right)
\mathrm{d}\theta
\nonumber\\
\hspace{-0.3cm}&=&
\hspace{-0.3cm}
\int_{x}^{\infty}\mathbb{E}_{x}\left(\mathrm{e}^{-q\tau_{s}^{+}}
\mathbf{1}_{\{\tau_{s}^{+}<\tau_{v}^{-}\}}\right)
 n\left(\mathrm{e}^{-q \rho_{s}^{+}}
; \overline{\varepsilon}>s, s-\varepsilon\left(\rho_{s}^{+}-\right)\in\mathrm{d}y,
\,\varepsilon\left(\rho_{s}^{+}\right)-s\in\mathrm{d}z
,\,\overline{\alpha}_{s}\leq s-v\right)
\mathrm{d}s
\nonumber\\
\hspace{-0.3cm}&=&
\hspace{-0.3cm}
\int_{x}^{\infty}\frac{W_{q}(x-v)}{W_{q}(s-v)}
 \,n\left(\mathrm{e}^{-q \rho_{s}^{+}}
; s-\varepsilon\left(\rho_{s}^{+}-\right)\in\mathrm{d}y,
\,\varepsilon\left(\rho_{s}^{+}\right)-s\in\mathrm{d}z
,\,\overline{\alpha}_{s}\leq s-v
\right)
\mathrm{d}s
,\nonumber
\end{eqnarray}
which combined with \eqref{Bf.Ky.} yields
\begin{eqnarray}
\hspace{-0.3cm}&&
\hspace{-0.3cm}
\frac{1}{W_{q}(x-v)}\,n\left(\mathrm{e}^{-q \rho_{x}^{+}}
; x-\varepsilon\left(\rho_{x}^{+}-\right)\in\mathrm{d}y,
\,\varepsilon\left(\rho_{x}^{+}\right)-x\in\mathrm{d}z
,\,\overline{\alpha}_{x}\leq x-v
\right)
\nonumber\\
\hspace{-0.3cm}&=&
\hspace{-0.3cm}
\frac{W_{q}^{\prime}(x-y)W_{q}(x-v)-W_{q}(x-y)W_{q}^{\prime}(x-v)}{\left(W_{q}(x-v)\right)^{2}}
\mathbf{1}_{\{x>y\}}
\upsilon(\mathrm{d}z+y)\,\mathrm{d}y
\nonumber\\
\hspace{-0.3cm}&&\hspace{-0.3cm}
+\frac{W_{q}(0)}{W_{q}(x-v)}\,
\upsilon(\mathrm{d}z+x)\,
\delta_{x}(\mathrm{d}y)
,\quad x,y,z\in(0,\infty),v\in(0,x\wedge y],\nonumber
\end{eqnarray}
which is \eqref{v.plu.}. The proof is complete.
\end{proof}

\medskip
The following Lemma \ref{lemma3} solves the general drawdown based two-sided exit problem, and can be found in  Proposition 3.1 of Li et al. (2017) and Lemma 3.2 of Wang and Zhou (2018).

\medskip
\begin{lem}\label{lemma3}
For $q>0$ and $s>x$ and general drawdown function $\xi$, we have
\begin{eqnarray}\label{}
\mathbb{E}_{x}\left(\mathrm{e}^{-q\tau_{s}^{+}}\mathbf{1}_{\{\tau_{s}^{+}<\tau_{\xi}\}}\right)
=\exp\left(-\int_{x}^{s}\frac{W_{q}^{\prime}(\overline{\xi}\left(z\right))}
{W_{q}(\overline{\xi}\left(z\right))}\mathrm{d}z\right),\nonumber
\end{eqnarray}
where $\overline{\xi}\left(z\right)=z-\xi(z)$.
\end{lem}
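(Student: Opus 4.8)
The plan is to treat the target quantity as a function of the starting level and to extract a first-order differential equation in that level by comparing nearby levels through the strong Markov property. Fix $q>0$, $s$ and the drawdown function $\xi$, and set
$$g(x):=\mathbb{E}_{x}\left(\mathrm{e}^{-q\tau_{s}^{+}}\mathbf{1}_{\{\tau_{s}^{+}<\tau_{\xi}\}}\right),\qquad x\in(0,s].$$
First I would record the boundary value $g(s)=1$ (since $\tau_{s}^{+}=0$ when $X(0)=s$) together with a clean factorization. For small $\delta>0$ apply the strong Markov property at $T:=\tau_{x+\delta}^{+}$. Because $X$ is spectrally negative it has no positive jumps, so $X(T)=x+\delta$ on $\{T<\infty\}$ and the process sits at its running maximum at time $T$; hence the post-$T$ drawdown problem restarts deterministically from level $x+\delta$ at its own maximum. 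Since no drawdown can occur at the maximum-attaining time $T$, this yields
$$g(x)=\mathbb{E}_{x}\left(\mathrm{e}^{-qT}\mathbf{1}_{\{T<\tau_{\xi}\}}\right)\,g(x+\delta).$$

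The second step is to evaluate $h_{\delta}(x):=\mathbb{E}_{x}\left(\mathrm{e}^{-qT}\mathbf{1}_{\{T<\tau_{\xi}\}}\right)$ to first order in $\delta$ using the known two-sided exit identity $\mathbb{E}_{x}\left(\mathrm{e}^{-q\tau_{s}^{+}}\mathbf{1}_{\{\tau_{s}^{+}<\tau_{a}^{-}\}}\right)=W_{q}(x-a)/W_{q}(s-a)$ already invoked in the proof of Lemma \ref{lemma1}. While $\overline{X}$ stays in $[x,x+\delta]$ the drawdown level $\xi(\overline{X})$ is a moving target, but continuity of $\xi$ lets me sandwich it: writing $\xi_{-}=\min_{[x,x+\delta]}\xi$ and $\xi_{+}=\max_{[x,x+\delta]}\xi$, staying above the fixed level $\xi_{+}$ until reaching $x+\delta$ is sufficient, and staying above $\xi_{-}$ is necessary, for avoiding $\tau_{\xi}$ on $[0,T]$. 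As $\mathrm{e}^{-qT}\ge0$, this gives
$$\frac{W_{q}(x-\xi_{+})}{W_{q}(x+\delta-\xi_{+})}\le h_{\delta}(x)\le\frac{W_{q}(x-\xi_{-})}{W_{q}(x+\delta-\xi_{-})}.$$

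The third step is the limit. With $\overline{\xi}(x)=x-\xi(x)$ and $\xi_{\pm}\to\xi(x)$ as $\delta\downarrow0$, each bound has the form $W_{q}(\overline{\xi}(x)-\eta_{\delta})/W_{q}(\overline{\xi}(x)-\eta_{\delta}+\delta)$ with $\eta_{\delta}\to0$, so
$$\frac{1}{\delta}\left(1-\frac{W_{q}(\overline{\xi}(x)-\eta_{\delta})}{W_{q}(\overline{\xi}(x)-\eta_{\delta}+\delta)}\right)=\frac{1}{\delta\,W_{q}(\overline{\xi}(x)-\eta_{\delta}+\delta)}\int_{\overline{\xi}(x)-\eta_{\delta}}^{\overline{\xi}(x)-\eta_{\delta}+\delta}W_{q}^{\prime}(u)\,\mathrm{d}u\longrightarrow\frac{W_{q}^{\prime}(\overline{\xi}(x))}{W_{q}(\overline{\xi}(x))}$$
by continuity of $W_{q}^{\prime}$, where $\overline{\xi}(x)>0$ keeps everything away from the origin. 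Squeezing yields $\delta^{-1}(1-h_{\delta}(x))\to W_{q}^{\prime}(\overline{\xi}(x))/W_{q}(\overline{\xi}(x))$. Combined with the factorization, $\delta^{-1}(g(x+\delta)-g(x))=g(x+\delta)\,\delta^{-1}(1-h_{\delta}(x))$, so $g$ solves $g^{\prime}(x)/g(x)=W_{q}^{\prime}(\overline{\xi}(x))/W_{q}(\overline{\xi}(x))$ on $(0,s)$ with $g(s)=1$. Integrating from $x$ to $s$ produces the stated exponential formula.

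The main obstacle I anticipate is making the moving-boundary approximation rigorous rather than heuristic: one must verify that replacing $\xi(\overline{X})$ by the constants $\xi_{\pm}$ over $[x,x+\delta]$ costs only $o(\delta)$, which is exactly what the squeeze above delivers provided $\xi$ is continuous (so $\eta_{\delta}\to0$) and $W_{q}\in C^{1}(0,\infty)$ with $W_{q}(\overline{\xi}(x))>0$. A secondary point is to confirm sufficient regularity of $g$ (continuity and positivity, so the logarithmic derivative and the passage $g(x+\delta)\to g(x)$ are legitimate); both follow from the factorization and the positivity of the scale functions, after which solving the first-order linear ordinary differential equation under $g(s)=1$ is routine.
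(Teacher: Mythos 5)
Your argument is correct in substance. Note first that the paper itself gives no proof of Lemma 3.3: it simply cites Proposition 3.1 of Li et al.\ (2017) and Lemma 3.2 of Wang and Zhou (2018). The proofs there proceed either by the excursion-theoretic exponential formula (decomposing the event $\{\tau_{s}^{+}<\tau_{\xi}\}$ into the requirement that no excursion $\varepsilon_{t}$, $t\le s-x$, exceeds the moving height $\overline{\xi}(x+t)$, and evaluating the resulting Poisson-point-process product via the known rate $n(\overline{\varepsilon}>a \text{ or } \zeta>\mathbf{e}_{q})=W_{q}^{\prime}(a)/W_{q}(a)$), or equivalently by approximating $\xi$ with step functions and taking a limit of products of constant-barrier two-sided exit identities. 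Your strong-Markov factorization $g(x)=h_{\delta}(x)\,g(x+\delta)$ followed by the squeeze and the ODE $g^{\prime}/g=W_{q}^{\prime}(\overline{\xi})/W_{q}(\overline{\xi})$ is precisely the infinitesimal version of that discretization argument, with the advantage of being self-contained (no excursion theory, only the classical identity $\mathbb{E}_{x}(\mathrm{e}^{-q\tau_{b}^{+}};\tau_{b}^{+}<\tau_{a}^{-})=W_{q}(x-a)/W_{q}(b-a)$) and the mild cost of the regularity bookkeeping you already flag. Three small points to make explicit: (i) the sandwich is only informative once $\delta$ is small enough that $\xi_{+}<x$, which follows from continuity of $\xi$ and $\overline{\xi}(x)>0$, so it should be stated that the bounds are taken along $\delta\downarrow 0$ after this threshold; (ii) your limit produces only the right-hand derivative of $g$, so either run the symmetric argument from $x-\delta$ or invoke the standard fact that a continuous function with a continuous right derivative is $C^{1}$; and (iii) the step $\delta^{-1}(W_{q}(u+\delta)-W_{q}(u))\to W_{q}^{\prime}(u)$ uniformly along $u\to\overline{\xi}(x)$ needs $W_{q}\in C^{1}$, which is not automatic for all spectrally negative L\'{e}vy processes but is implicitly assumed throughout the paper (it uses $W_{q}^{\prime\prime}$ freely). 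With these caveats recorded, your proof stands as a legitimate alternative derivation.
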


\medskip
Motivated by the last minimum of the surplus before ruin proposed by Biffis and Morales (2010), this paper includes a constraint on the last minimum surplus before drawdown intending to better reflect the company's financial condition. Let $\vartheta: [x,\infty)\rightarrow[0,\infty)$ be a measurable function satisfying $0\leq\vartheta(z)<\overline{\xi}(z)$. In our main result in below, we use $\vartheta(\overline{X}(t))$ to describe the minimum capital requirement of the surplus above the drawdown level. Note that when $\vartheta(\overline{X}(t))$ equals to a constant $v$ and $\xi(z)\equiv 0$ we degenerate to the case of ruin adopted by Lemma \ref{lemma1}. 
Put $\varsigma(z)=\vartheta(z)+\xi(z)$ and $\overline{\varsigma}(z)=z-(\vartheta(z)+\xi(z))=\overline{\xi}(z)-\vartheta(z)$, hence $\varsigma$ is also a general drawdown function. Theorem \ref{3.1} derives the extended expected discounted penalty function at the general drawdown time for L\'{e}vy risk processes, where a similar excursion approach as in Pistorius (2007), Kyprianou and Zhou (2009) and Li et al (2017) was adopted.



\medskip
\begin{thm}\label{3.1}
Denote by $\upsilon$ the L\'{e}vy measure of $-X$, by $\ell:=L^{-1}(L(\tau_{\xi})-)$ the first time when the L\'{e}vy process $X$ hits the running maximum prior to the general drawdown time.
\begin{itemize}
\item[(a)]
For $s\in(x\vee y,\infty)$, $y\in[\xi(s),\infty)$, $z\in (-\xi(s),\infty)$ and $q,\,\lambda \geq0$, we have
\begin{eqnarray}\label{pen.1}
\hspace{-0.3cm}&&\hspace{-0.3cm}
\mathbb{E}_{x}\left(\mathrm{e}^{-q\ell-\lambda \left(\tau_{\xi}-\ell\right)}; \,\overline{X}(\tau_{\xi})\in \mathrm{d}s,\,X(\tau_{\xi}-)\in\mathrm{d}y,
\,-X(\tau_{\xi})\in\mathrm{d}z
,\inf_{t\in[\,0,\tau_{\xi})}\left(X(t)-\varsigma(\overline{X}(t))\right)\geq 0\right)
\nonumber\\
\hspace{-0.3cm}&=&
\hspace{-0.3cm}
\exp\left(-\int_{x}^{s}\frac{W_{q}^{\prime}(\overline{\varsigma}\left(w\right))}
{W_{q}(\overline{\varsigma}\left(w\right))}\mathrm{d}w\right)
\Bigg(W_{\lambda}(0+)\,\upsilon(s+\mathrm{d}z)\,\delta_{s}(\mathrm{d}y)
\nonumber\\
\hspace{-0.3cm}&&
\hspace{-0.3cm}
\left.
+\left(W_{\lambda}^{\prime}(s-y)-\frac{W_{\lambda}^{\prime}(\overline{\varsigma}(s))}
{W_{\lambda}(\overline{\varsigma}(s))}W_{\lambda}(s-y)\right)
\upsilon\left(y+\mathrm{d}z\right)
\,\mathbf{1}_{\{y<s\}}
\mathrm{d}y\right)
\mathrm{d}s.
\end{eqnarray}
\item[(b)]For $s\in(x,\infty)$ and $q,\,\lambda \geq0$, we have
\begin{eqnarray}\label{pen.2}
\hspace{-0.3cm}&&\hspace{-0.3cm}
\mathbb{E}_{x}\left(\mathrm{e}^{-q\ell-\lambda \left(\tau_{\xi}-\ell\right)}; \,\overline{X}(\tau_{\xi})\in \mathrm{d}s,
\,X(\tau_{\xi})=\xi(s)\right)
\nonumber\\
\hspace{-0.3cm}&=&
\hspace{-0.3cm}
\frac{\sigma^{2}}{2}\exp\left(-\int_{x}^{s}\frac{W_{q}^{\prime}(\overline{\xi}\left(w\right))}
{W_{q}(\overline{\xi}\left(w\right))}\mathrm{d}w\right)
\left(\frac{\left(W_{\lambda}^{\prime}\left(\overline{\xi}\left(s\right)\right)\right)^{2}}
{W_{\lambda}\left(\overline{\xi}\left(s\right)\right)}
-W_{\lambda}^{\prime\prime}\left(\overline{\xi}\left(s\right)\right)\right)
\mathrm{d}s,
\end{eqnarray}
where the expression is understood to be equal to $0$ if $\sigma=0$. At the time of drawdown, the surplus stays at the general drawdown level with a positive probability if and only if the Gaussian part of the L\'{e}vy process is nontrivial.
\end{itemize}
\end{thm}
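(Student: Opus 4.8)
The plan is to run the same excursion-theoretic machinery used in the proofs of Lemmas \ref{lemma2} and \ref{lemma1}, now organized around the \emph{general drawdown} boundary rather than the ruin level $0$. Since $X$ has no positive jumps, its running maximum increases continuously and the reflected process $\{\overline{X}(t)-X(t)\}$ decomposes into excursions away from $0$ indexed by the local time $L(t)=\overline{X}(t)-x$. At local time $\theta$ the running maximum equals $x+\theta$, the drawdown boundary sits at depth $\overline{\xi}(x+\theta)$ and the capital-requirement boundary at depth $\overline{\varsigma}(x+\theta)$. Drawdown occurs inside a single excursion $\varepsilon_{\theta}$ --- the last one before $\tau_{\xi}$, whose left endpoint is exactly $\ell=L^{-1}(\theta-)$ --- and this excursion is the first to reach depth $\overline{\xi}(x+\theta)$. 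Writing $s=x+\theta$ for the running maximum at drawdown, I translate the target quantities through $\varepsilon_{\theta}$ exactly as in (\ref{renewal.iden.via.n}): $\tau_{\xi}=L^{-1}(\theta-)+\rho^{+}_{\overline{\xi}(s)}(\theta)$, $X(\tau_{\xi}-)=s-\varepsilon_{\theta}(\rho^{+}_{\overline{\xi}(s)}(\theta)-)$ and $X(\tau_{\xi})=s-\varepsilon_{\theta}(\rho^{+}_{\overline{\xi}(s)}(\theta))$. The discount factor splits accordingly: $\ell$ is discounted at rate $q$ and the within-excursion time $\tau_{\xi}-\ell=\rho^{+}_{\overline{\xi}(s)}(\theta)$ at rate $\lambda$, which explains the appearance of both $W_{q}$ and $W_{\lambda}$.

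Next I would encode the side conditions in excursion language. The minimum-capital constraint $\inf_{t<\tau_{\xi}}(X(t)-\varsigma(\overline{X}(t)))\geq 0$ is equivalent to requiring that every excursion $\varepsilon_{t}$ with $t<\theta$ stays shallower than the $\varsigma$-boundary, $\overline{\varepsilon}_{t}<\overline{\varsigma}(x+t)$, together with the height restriction $\overline{\alpha}_{\overline{\xi}(s)}(\theta)\leq\overline{\varsigma}(s)$ on the drawdown excursion itself; because $\vartheta\geq 0$ gives $\overline{\varsigma}\leq\overline{\xi}$, this automatically rules out any earlier $\xi$-drawdown, so the prior excursions are governed by $\varsigma$ alone. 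I then write the left-hand side of (\ref{pen.1}) as an expectation of a sum over excursions, carrying the factor $\mathrm{e}^{-qL^{-1}(\theta-)}\prod_{t<\theta}\mathbf{1}_{\{\overline{\varepsilon}_{t}<\overline{\varsigma}(x+t)\}}$ for the pre-excursion part and a functional of $\varepsilon_{\theta}$ for the drawdown part, and apply the compensation formula exactly as in (\ref{B1}). This turns the sum into $\int_{0}^{\infty}\mathrm{d}\theta$ of a product. The pre-excursion factor is recognized as the two-sided exit quantity $\mathbb{E}_{x}(\mathrm{e}^{-q\tau^{+}_{s}}\mathbf{1}_{\{\tau^{+}_{s}<\tau_{\varsigma}\}})$, which Lemma \ref{lemma3} (applied with drawdown function $\varsigma$) evaluates to $\exp(-\int_{x}^{s}W_{q}^{\prime}(\overline{\varsigma}(w))/W_{q}(\overline{\varsigma}(w))\,\mathrm{d}w)$, producing the exponential prefactor in (\ref{pen.1}); the change of variable $\theta\mapsto s=x+\theta$ then turns the $\mathrm{d}\theta$-integral into the density in $\overline{X}(\tau_{\xi})\in\mathrm{d}s$.

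It then remains to read off the within-excursion factor. For part (a) this factor is precisely the object computed in Lemma \ref{lemma1} with crossing depth $\overline{\xi}(s)$ and height bound $\overline{\varsigma}(s)=\overline{\xi}(s)-\vartheta(s)$, i.e. the substitution $x\mapsto\overline{\xi}(s)$, $v\mapsto\vartheta(s)$, $q\mapsto\lambda$. The only genuine bookkeeping is the shift from excursion-intrinsic coordinates to surplus coordinates: using $\overline{\xi}(s)=s-\xi(s)$ one checks $\overline{\xi}(s)-\varepsilon(\rho^{+}-)=X(\tau_{\xi}-)-\xi(s)$ and $\varepsilon(\rho^{+})-\overline{\xi}(s)=\xi(s)-X(\tau_{\xi})$, so that the jump size equals $X(\tau_{\xi}-)-X(\tau_{\xi})=y+z$ and the L\'{e}vy term is recentred to $\upsilon(y+\mathrm{d}z)$ on $\{y<s\}$ and to $\upsilon(s+\mathrm{d}z)$ on the atom $\delta_{s}(\mathrm{d}y)$; likewise $\overline{\xi}(s)-(y-\xi(s))=s-y$ and $\overline{\xi}(s)-\vartheta(s)=\overline{\varsigma}(s)$ convert the $W_{\lambda}$-terms of Lemma \ref{lemma1} into those displayed in (\ref{pen.1}). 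For part (b) there is no capital constraint, so the pre-excursion factor is evaluated by Lemma \ref{lemma3} with $\xi$ itself, giving the prefactor $\exp(-\int_{x}^{s}W_{q}^{\prime}(\overline{\xi}(w))/W_{q}(\overline{\xi}(w))\,\mathrm{d}w)$; the within-excursion factor is now the creeping event $\{\overline{\varepsilon}>\overline{\xi}(s),\,\varepsilon(\rho^{+}_{\overline{\xi}(s)})=\overline{\xi}(s)\}$, whose discounted $n$-mass is supplied directly by Lemma \ref{lemma2} with $s\mapsto\overline{\xi}(s)$, $q\mapsto\lambda$, yielding $\tfrac{\sigma^{2}}{2}((W_{\lambda}^{\prime}(\overline{\xi}(s)))^{2}/W_{\lambda}(\overline{\xi}(s))-W_{\lambda}^{\prime\prime}(\overline{\xi}(s)))$ and, in particular, the stated dichotomy that creeping at the drawdown boundary has positive mass if and only if $\sigma\neq 0$.

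The step I expect to be the main obstacle is the rigorous use of the compensation formula together with the measurability of the pre-excursion functional: one must verify that $\mathrm{e}^{-qL^{-1}(\theta-)}\prod_{t<\theta}\mathbf{1}_{\{\overline{\varepsilon}_{t}<\overline{\varsigma}(x+t)\}}$ is a legitimate previsible integrand for the excursion point process, so that its $n$-expectation factorises off, and that the resulting pre-excursion factor genuinely coincides with $\mathbb{E}_{x}(\mathrm{e}^{-q\tau^{+}_{s}}\mathbf{1}_{\{\tau^{+}_{s}<\tau_{\varsigma}\}})$ rather than a boundary-modified variant. The coordinate shift in part (a) is purely computational, but it must be tracked with care because both the drawdown level $\xi(s)$ and the constraint level $\varsigma(s)$ enter, with $\xi$ governing the crossing of the last excursion and $\varsigma$ governing the prior excursions and the height bound.
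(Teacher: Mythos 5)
Your proposal is correct and follows essentially the same route as the paper's own proof: decompose at the last excursion straddling $\tau_{\xi}$, apply the compensation formula to factor the pre-excursion two-sided exit term (evaluated by Lemma \ref{lemma3} with $\varsigma$ in part (a) and $\xi$ in part (b)) from the within-excursion $n$-mass (evaluated by Lemma \ref{lemma1} with $x\mapsto\overline{\xi}(s)$, $v\mapsto\vartheta(s)$, $q\mapsto\lambda$ in part (a) and by Lemma \ref{lemma2} in part (b)), and then perform the coordinate shift from excursion depth to surplus level. Your bookkeeping of that shift ($s-y$, $\upsilon(y+\mathrm{d}z)$, the atom at $\delta_{s}(\mathrm{d}y)$ with $\upsilon(s+\mathrm{d}z)$) matches the paper's computation exactly.
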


\medskip
\begin{rem}
Note that on the left hand side of equation (\ref{pen.1}) we include a constraint
$$\inf_{t\in[\,0,\tau_{\xi})}\left(X(t)-\varsigma(\overline{X}(t))\right)\geq 0\Leftrightarrow
X(t)-\xi(\overline{X}(t))\geq \vartheta(\overline{X}(t)),\,\, 0\leq t<\tau_{\xi},$$
which in fact is a constraint on $X(t)-\xi(\overline{X}(t))$, that is, the level of surplus that is above the drawdown level $\xi(\overline{X}(t))$. We require this level to be at least $\vartheta(\overline{X}(t))\geq0$, which could be linked with the confidence level of the company and hence to be dependent with the historical running maximum $\overline{X}(t)$. The lower $\vartheta(\overline{X}(t))$, the worse the financial conditions that need to be negotiated with the company's capital providers, and the more urgent for the company to examine its financial activities. The level of $\vartheta(\overline{X}(t))$ provides a warning line of the company's inadvisable financial decisions such as a low premium rate, also serves as a buffer towards future's capital injections.
\end{rem}

\medskip
\begin{proof}[Proof of Theorem \ref{3.1}\emph{:}]\,\,\,
Similar to the idea in Lemma \ref{lemma2}, the definition of general drawdown leads to the existence of an excursion with a positive lifetime, such that $\tau_{\xi}$ lies in between the left and right end points of this excursion.
Denote this excursion by $\varepsilon_{\theta}$ with $\theta\geq0$, then $\varepsilon_{\theta}$ is the last excursion whose left-end point is less than $\tau_{\xi}$, and we have $\overline{\varepsilon}_{\theta}>\overline{\xi}(x+\theta)$ and $\overline{\varepsilon}_{t}\leq\overline{\xi}(x+t)$ for all excursions $\varepsilon_{t}$ with $t<\theta$. That is, a general drawdown does not occur during the lifetime of $\varepsilon_{t}$ with $t<\theta$, while it does occur during the lifetime of $\varepsilon_{\theta}$.
Furthermore, one can translate immediately the surplus before and at the general drawdown time by the excursion $\varepsilon_{\theta}$ through
$$X(\tau_{\xi}-)=x+\theta-\varepsilon_{\theta}\left(\rho_{\overline{\xi}(x+\theta)}^{+}(\theta)-\right)
,\quad X(\tau_{\xi})=x+\theta-\varepsilon_{\theta}\left(\rho_{\overline{\xi}(x+\theta)}^{+}(\theta)\right),$$
where $\rho_{\overline{\xi}(x+\theta)}^{+}(\theta)$ is defined via (\ref{exc.upcro.b}) with $\varepsilon$ replaced by $\varepsilon_{\theta}$ and $b$ replaced by $\overline{\xi}(x+\theta)$.
Meanwhile, the general drawdown time can be rewritten as
$$\tau_{\xi}=L^{-1}(\theta-)+\rho_{\overline{\xi}(x+\theta)}^{+}(\theta).$$
Therefore we have that, for $y, z\in (-\infty,+\infty)$, $q,\,\lambda \geq0$ and any open interval $B\subseteq(x,\infty)$,
\begin{eqnarray}\label{B10}
\hspace{-0.3cm}&&\hspace{-0.3cm}
\mathbb{E}_{x}\left(\mathrm{e}^{-q\ell-\lambda \left(\tau_{\xi}-\ell\right)}; \,\overline{X}(\tau_{\xi})\in B,\,X(\tau_{\xi}-)\in\mathrm{d}y,
\,-X(\tau_{\xi})\in\mathrm{d}z,\inf_{t\in[0,\tau_{\xi})}\left(X(t)-\varsigma(\overline{X}(t))\right)\geq 0\right)
\nonumber\\
\hspace{-0.3cm}&=&
\hspace{-0.3cm}
\mathbb{E}_{x}\left(\sum_{\theta\in B-x}\mathrm{e}^{-qL^{-1}(\theta-)}
\prod_{t<\theta}\mathbf{1}_{\{\overline{\varepsilon}_{t}<\overline{\varsigma}\left(x+t\right)\}}
\mathrm{e}^{-\lambda \rho_{\overline{\xi}(x+\theta)}^{+}(\theta)}
\mathbf{1}_{\{\overline{\varepsilon}_{\theta}>\overline{\xi}(x+\theta)\}}\right.
\nonumber\\
\hspace{-0.3cm}&&
\hspace{1.2cm}
\times\left.
 \mathbf{1}_{\{x+\theta-\varepsilon_{\theta}\left(\rho_{\overline{\xi}(x+\theta)}^{+}(\theta)-\right)\in\mathrm{d}y,
\,-\left(x+\theta-\varepsilon_{\theta}\left(\rho_{\overline{\xi}(x+\theta)}^{+}(\theta)\right)\right)
\in\mathrm{d}z,\,\overline{\alpha}_{\overline{\xi}(x+\theta)}\leq \overline{\varsigma}(x+\theta)\}}\right)
,
\end{eqnarray}
where $\varepsilon_{\theta}$ represents the last excursion prior to $\tau_{\xi}$, $\rho_{\overline{\xi}(x+\theta)}^{+}(\theta)$ is given by (\ref{exc.upcro.b}) with $\varepsilon$ replaced by $\varepsilon_{\theta}$ and $b$ replaced by $\overline{\xi}(x+\theta)$, and $B-x:=\left.\{y-x\right| y\in B\}$.

Using the compensation formula (see, Corollary 11 in Chapter IV.4 of Bertoin (1996)) in the excursion expression in (\ref{B10}), we obtain
\begin{eqnarray}\label{B1}
\hspace{-0.3cm}&&\hspace{-0.3cm}
\mathbb{E}_{x}\left(\mathrm{e}^{-q\ell-\lambda \left(\tau_{\xi}-\ell\right)}; \,\overline{X}(\tau_{\xi})\in B,\,X(\tau_{\xi}-)\in\mathrm{d}y,
\,-X(\tau_{\xi})\in\mathrm{d}z,\inf_{t\in[0,\tau_{\xi})}\left(X(t)-\varsigma(\overline{X}(t))\right)\geq 0\right)
\nonumber\\
\hspace{-0.3cm}&=&
\hspace{-0.3cm}
\int_{B-x}\mathbb{E}_{x}\left(\mathrm{e}^{-qL^{-1}(\theta-)}
\prod_{t<\theta}\mathbf{1}_{\{\overline{\varepsilon}_{t}<\overline{\varsigma}\left(x+t\right)\}}\right)
\,\times n\left(\mathrm{e}^{-\lambda \rho_{\overline{\xi}(x+\theta)}^{+}(\theta)}
\mathbf{1}_{\{\overline{\varepsilon}_{\theta}>\overline{\xi}(x+\theta)\}}\right.
\nonumber\\
\hspace{-0.3cm}&&
\hspace{1.2cm}
\times\left.
 \mathbf{1}_{\{x+\theta-\varepsilon_{\theta}\left(\rho_{\overline{\xi}(x+\theta)}^{+}(\theta)-\right)
 \in\mathrm{d}y,
\,-\left(x+\theta-\varepsilon_{\theta}
\left(\rho_{\overline{\xi}(x+\theta)}^{+}(\theta)\right)\right)\in\mathrm{d}z
,\,\overline{\alpha}_{\overline{\xi}(x+\theta)}\leq \overline{\varsigma}(x+\theta)\}}\right)
\mathrm{d}\theta
\nonumber\\
\hspace{-0.3cm}&=&
\hspace{-0.3cm}
\int_{B-x}\mathbb{E}_{x}\left(\mathrm{e}^{-q\tau_{x+\theta}^{+}}
\mathbf{1}_{\{\tau_{x+\theta}^{+}<\tau_{\varsigma}\}}\right)
\,\times n\left(\mathrm{e}^{-\lambda \rho_{\overline{\xi}(x+\theta)}^{+}}
\mathbf{1}_{\{\overline{\varepsilon}>\overline{\xi}(x+\theta)\}}\right.
\nonumber\\
\hspace{-0.3cm}&&
\hspace{1.2cm}
\times\left.
 \mathbf{1}_{\{x+\theta-\varepsilon\left(\rho_{\overline{\xi}(x+\theta)}^{+}-\right)\in\mathrm{d}y,
\,\varepsilon\left(\rho_{\overline{\xi}(x+\theta)}^{+}\right)-\left(x+\theta\right)\in\mathrm{d}z
,\,\overline{\alpha}_{\overline{\xi}(x+\theta)}\leq \overline{\varsigma}(x+\theta)\}}\right)
\mathrm{d}\theta
\nonumber\\
\hspace{-0.3cm}&=&
\hspace{-0.3cm}
\int_{B}\mathbb{E}_{x}\left(\mathrm{e}^{-q\tau_{s}^{+}}
\mathbf{1}_{\{\tau_{s}^{+}<\tau_{\varsigma}\}}\right)
 n\left(\mathrm{e}^{-\lambda \rho_{\overline{\xi}(s)}^{+}}
; \overline{\varepsilon}>\overline{\xi}(s), s-\varepsilon\left(\rho_{\overline{\xi}(s)}^{+}-\right)\in\mathrm{d}y,
\,\varepsilon\left(\rho_{\overline{\xi}(s)}^{+}\right)-s\in\mathrm{d}z
,\,\overline{\alpha}_{\overline{\xi}(s)}\leq \overline{\varsigma}(s)\right)
\mathrm{d}s
\nonumber\\
\hspace{-0.3cm}&=&
\hspace{-0.3cm}
\int_{B}\exp\left(-\int_{x}^{s}\frac{W_{q}^{\prime}(\overline{\varsigma}\left(w\right))}
{W_{q}(\overline{\varsigma}\left(w\right))}\mathrm{d}w\right)\,\mathbf{1}_{\{y\geq \xi(s)\}}\,\mathbf{1}_{\{z>-\xi(s)\}}
\nonumber\\
\hspace{-0.3cm}&&
\hspace{-0.3cm}
\times\, n\left(\mathrm{e}^{-\lambda \rho_{\overline{\xi}(s)}^{+}}
; \overline{\xi}(s)-\varepsilon\left(\rho_{\overline{\xi}(s)}^{+}-\right)\in-\xi(s)+\mathrm{d}y,
\,\varepsilon\left(\rho_{\overline{\xi}(s)}^{+}\right)-\overline{\xi}(s)\in\xi(s)+\mathrm{d}z
,\,\overline{\alpha}_{\overline{\xi}(s)}\leq \overline{\varsigma}(s)\right)
\mathrm{d}s
\nonumber\\
\hspace{-0.3cm}&=&
\hspace{-0.3cm}
\int_{B}\exp\left(-\int_{x}^{s}\frac{W_{q}^{\prime}(\overline{\varsigma}\left(w\right))}
{W_{q}(\overline{\varsigma}\left(w\right))}\mathrm{d}w\right)\,\mathbf{1}_{\{y\geq \xi(s),\,z>-\xi(s)\}}
\Bigg(W_{\lambda}(0+)\,\upsilon(s+\mathrm{d}z)\,\delta_{s}(\mathrm{d}y)
\nonumber\\
\hspace{-0.3cm}&&
\hspace{-0.3cm}
\left.
+\left(W_{\lambda}^{\prime}(s-y)-\frac{W_{\lambda}^{\prime}(\overline{\varsigma}(s))}
{W_{\lambda}(\overline{\varsigma}(s))}W_{\lambda}(s-y)\right)
\upsilon\left(y+\mathrm{d}z\right)
\,\mathbf{1}_{\{y<s\}}
\mathrm{d}y\right)
\mathrm{d}s
,
\end{eqnarray}
where in the last equality we have used Lemma \ref{lemma1}, and in the last but one equality we have used Lemma \ref{lemma3}.
The arbitrariness of the open interval $B$ together with (\ref{B1}) yields
(\ref{pen.1}). 

It remains to prove Case $(b)$. In fact, by the compensation formula (see, Corollary 11 in Chapter IV.4 of Bertoin (1996)) we have
\begin{eqnarray}\label{pen.2.0}
\hspace{-0.3cm}&&\hspace{-0.3cm}
\mathbb{E}_{x}\left(\mathrm{e}^{-q\ell-\lambda \left(\tau_{\xi}-\ell\right)}; \,\overline{X}(\tau_{\xi})\in B,
\,X(\tau_{\xi})=\xi\left(\overline{X}(\tau_{\xi})\right)\right)
\nonumber\\
\hspace{-0.3cm}&=&
\hspace{-0.3cm}
\mathbb{E}_{x}\left(\sum_{\theta\in B-x}\mathrm{e}^{-qL^{-1}(\theta-)}
\prod_{t<\theta}\mathbf{1}_{\{\overline{\varepsilon}_{t}<\overline{\xi}(x+t)\}}
\mathrm{e}^{-\lambda \rho_{\overline{\xi}(x+\theta)}^{+}(\theta)}
\mathbf{1}_{\{\overline{\varepsilon}_{\theta}>\overline{\xi}(x+\theta)\}}
 \mathbf{1}_{\{\varepsilon_{\theta}(\rho_{\overline{\xi}(x+\theta)}^{+}(\theta))
=\overline{\xi}\left(x+\theta\right)\}}\right)
\nonumber\\
\hspace{-0.3cm}&=&
\hspace{-0.3cm}
\int_{B-x}\mathbb{E}_{x}\left(\mathrm{e}^{-qL^{-1}(\theta-)}
\prod_{t<\theta}\mathbf{1}_{\{\overline{\varepsilon}_{t}<\overline{\xi}(x+t)\}}\right)
\nonumber\\
\hspace{-0.3cm}&&
\hspace{0.6cm}
 \times
 n\left(\mathrm{e}^{-\lambda \rho_{\overline{\xi}(x+\theta)}^{+}(\theta)}
\mathbf{1}_{\{\overline{\varepsilon}_{\theta}>\overline{\xi}(x+\theta)\}}
\mathbf{1}_{\{\varepsilon_{\theta}(\rho_{\overline{\xi}(x+\theta)}^{+}(\theta))
=\overline{\xi}\left(x+\theta\right)\}}\right)
\mathrm{d}\theta
\nonumber\\
\hspace{-0.3cm}&=&
\hspace{-0.3cm}
\int_{B-x}\mathbb{E}_{x}\left(\mathrm{e}^{-q\tau_{x+\theta}^{+}}
\mathbf{1}_{\{\tau_{x+\theta}^{+}<\tau_{\xi}\}}\right)
 n\left(\mathrm{e}^{-\lambda \rho_{\overline{\xi}(x+\theta)}^{+}}
\mathbf{1}_{\{\overline{\varepsilon}>\overline{\xi}(x+\theta)\}}
\mathbf{1}_{\{\varepsilon(\rho_{\overline{\xi}(x+\theta)}^{+})
=\overline{\xi}\left(x+\theta\right)\}}\right)
\mathrm{d}\theta
\nonumber\\
\hspace{-0.3cm}&=&
\hspace{-0.3cm}
\int_{B}\exp\left(-\int_{x}^{s}\frac{W_{q}^{\prime}(\overline{\xi}\left(w\right))}
{W_{q}(\overline{\xi}\left(w\right))}\mathrm{d}w\right)
\, n\left(\mathrm{e}^{-\lambda \rho_{\overline{\xi}(s)}^{+}}
; \overline{\varepsilon}>\overline{\xi}(s),\,\varepsilon\left(\rho_{\overline{\xi}(s)}^{+}\right)=\overline{\xi}(s)\right)
\mathrm{d}s
\nonumber\\
\hspace{-0.3cm}&=&
\hspace{-0.3cm}
\int_{B}\exp\left(-\int_{x}^{s}\frac{W_{q}^{\prime}(\overline{\xi}\left(w\right))}
{W_{q}(\overline{\xi}\left(w\right))}\mathrm{d}w\right)
\frac{\sigma^{2}}{2}\left(\frac{\left(W_{\lambda}^{\prime}\left(\overline{\xi}\left(s\right)\right)\right)^{2}}
{W_{\lambda}\left(\overline{\xi}\left(s\right)\right)}
-W_{\lambda}^{\prime\prime}\left(\overline{\xi}\left(s\right)\right)\right)
\mathrm{d}s
,
\end{eqnarray}
where in the last equality we have used Lemma \ref{lemma2}, and in the last but one equality we have used Lemma \ref{lemma3}.
The arbitrariness of the open interval $B$ together with (\ref{pen.2.0}) yields
(\ref{pen.2}). This completes the proof of Case ($b$) of Theorem \ref{3.1}.
\end{proof}

\medskip
\begin{rem}
It should be pointed out that the excursion methodology adopted here to study the general drawdown is based on L\'{e}vy processes, and hence can not be easily applied to other underlying models, say for example, diffusion processes or jump diffusion processes. In Landriault et al. (2017), a methodology called ``short-time pathwise analysis" was proposed to study the classical drawdown involved fluctuation problems for general time-homogeneous Markov processes. Their method is supposed to be more general in the sense that can be adapted to study fluctuation problems involving the general drawdown time, but still seems restrictive in our case where both the time to reach the running maximum at $\tau_{\xi}$ and the surplus level before $\tau_{\xi}$ are involved.
\end{rem}

\medskip
Note that in Part $(a)$ of Theorem 3.1, $y\in[\xi(s),\infty)$ is equivalent to $y-\xi(s)\in[0,\infty)$, and $z\in(-\xi(s),\infty)$ is equivalent to $z+\xi(s)\in(0,\infty)$, we readily have the following result.
\begin{cor}
For $z>0$, $y\geq0$, $s\geq y\vee x$, and $q,\,\lambda \geq0$, we have
\begin{eqnarray}\label{}
\hspace{-0.3cm}&&\hspace{-0.3cm}
\mathbb{E}_{x}\left(\mathrm{e}^{-q\ell-\lambda \left(\tau_{\xi}-\ell\right)}; \,\overline{X}(\tau_{\xi})\in\mathrm{d}s,\,X(\tau_{\xi}-)-\xi(s)\in\mathrm{d}y,
\,\xi(s)-X(\tau_{\xi})\in\mathrm{d}z
,\inf_{t\in[\,0,\tau_{\xi})}\left(X(t)-\varsigma(\overline{X}(t))\right)\geq 0\right)
\nonumber\\
\hspace{-0.3cm}&=&
\hspace{-0.3cm}
\exp\left(-\int_{x}^{s}\frac{W_{q}^{\prime}(\overline{\varsigma}\left(w\right))}
{W_{q}(\overline{\varsigma}\left(w\right))}\mathrm{d}w\right)
\Bigg(W_{\lambda}(0+)\,\upsilon(\overline{\xi}(s)+\mathrm{d}z)\,\delta_{\overline{\xi}(s)}(\mathrm{d}y)
\nonumber\\
\hspace{-0.3cm}&&
\hspace{-0.3cm}
\left.
+\left(W_{\lambda}^{\prime}(\overline{\xi}(s)-y)-\frac{W_{\lambda}^{\prime}(\overline{\varsigma}(s))}
{W_{\lambda}(\overline{\varsigma}(s))}W_{\lambda}(\overline{\xi}(s)-y)\right)
\upsilon\left(y+\mathrm{d}z\right)
\,\mathbf{1}_{\{y<\overline{\xi}(s)\}}
\mathrm{d}y\right)\mathrm{d}s
.\nonumber
\end{eqnarray}
Furthermore, when $q=\lambda$, $\xi\equiv0$ and $\vartheta\equiv v\in[0,x)$ we have
\begin{eqnarray}\label{}
\hspace{-0.3cm}&&\hspace{-0.3cm}
\mathbb{E}_{x}\left(\mathrm{e}^{-q\tau_{0}^{-}}; \,\overline{X}(\tau_{0}^{-})\in\mathrm{d}s,\,X(\tau_{0}^{-}-)\in\mathrm{d}y,
\,-X(\tau_{0}^{-})\in\mathrm{d}z
, \,\underline{X}(\tau_{0}^{-})\geq v\right)
\nonumber\\
\hspace{-0.3cm}&=&
\hspace{-0.3cm}
\left.\frac{W_{q}(x-v)}
{W_{q}(s-v)}\,
\right(W_{q}(0+)\,\upsilon(s+\mathrm{d}z)\,\delta_{s}(\mathrm{d}y)
\nonumber\\
\hspace{-0.3cm}&&
\hspace{-0.3cm}
\left.
+\left(W_{q}^{\prime}(s-y)-\frac{W_{q}^{\prime}(s-v)}
{W_{q}(s-v)}W_{q}(s-y)\right)
\upsilon\left(y+\mathrm{d}z\right)
\,\mathbf{1}_{\{y<s\}}
\mathrm{d}y\right)\mathrm{d}s
,\nonumber
\end{eqnarray}
integrating which with respect to $s$ over $[x,\infty)$ and recalling that $\lim\limits_{x\rightarrow\infty}\frac{W_{q}(s-y)}{W_{q}(s-v)}=\mathrm{e}^{-\Phi_{q}(y-v)}$, one can recover \eqref{Bf.Ky.}. Hence, as a special case, the result of Theorem \ref{3.1} coincides with that of Theorem 1 in Biffis and Kyprianou (2010).
\end{cor}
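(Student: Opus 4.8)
The plan is to obtain the Corollary as a chain of elementary reductions of Theorem \ref{3.1}(a), with no fresh probabilistic input. First, for the opening display I would simply reparametrize the joint law in Theorem \ref{3.1}(a). As noted in the sentence preceding the statement, on the slice $\{\overline{X}(\tau_{\xi})\in\mathrm{d}s\}$ the events $\{X(\tau_{\xi}-)\in\mathrm{d}y\}$, $\{-X(\tau_{\xi})\in\mathrm{d}z\}$ with $y\geq\xi(s)$, $z>-\xi(s)$ are in bijection with $\{X(\tau_{\xi}-)-\xi(s)\in\mathrm{d}y\}$, $\{\xi(s)-X(\tau_{\xi})\in\mathrm{d}z\}$ with $y\geq 0$, $z>0$, via the substitutions $y_{\mathrm{old}}=y+\xi(s)$ and $z_{\mathrm{old}}=z-\xi(s)$. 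Plugging these into (\ref{pen.1}) I would track each factor: $s-y_{\mathrm{old}}=\overline{\xi}(s)-y$; the jump intensity $\upsilon(y+\mathrm{d}z)$ is invariant because the jump size $y+z$ is preserved by the two shifts; the atom $\delta_{s}(\mathrm{d}y)$ and the argument $s$ of $\upsilon(s+\mathrm{d}z)$ both move to $\overline{\xi}(s)$; and $\mathbf 1_{\{y<s\}}$ becomes $\mathbf 1_{\{y<\overline{\xi}(s)\}}$. This gives the first display.

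Second, for the specialization I would set $q=\lambda$, $\xi\equiv 0$ and $\vartheta\equiv v$. Then $\tau_{\xi}=\tau_{0}^{-}$, $\overline{\xi}(s)=s$, $\varsigma\equiv v$ and $\overline{\varsigma}(w)=w-v$, so the discount factor collapses to $\mathrm{e}^{-q\ell-\lambda(\tau_{\xi}-\ell)}=\mathrm{e}^{-q\tau_{0}^{-}}$ and the pathwise constraint reduces to $\underline{X}(\tau_{0}^{-})\geq v$. The only non-routine simplification is the exponential prefactor: since $\tfrac{\mathrm{d}}{\mathrm{d}w}\ln W_{q}(w-v)=W_{q}^{\prime}(w-v)/W_{q}(w-v)$, the integral telescopes to $\ln\big(W_{q}(s-v)/W_{q}(x-v)\big)$, so the prefactor becomes $W_{q}(x-v)/W_{q}(s-v)$. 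Substituting yields the second display.

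Third, for the recovery of (\ref{Bf.Ky.}) I would integrate the second display in $s$ over $[x\vee y,\infty)$, since the left side then no longer records the running maximum. The key observation is that the continuous part is an exact $s$-derivative,
\[
\frac{\mathrm{d}}{\mathrm{d}s}\,\frac{W_{q}(s-y)}{W_{q}(s-v)}
=\frac{1}{W_{q}(s-v)}\left(W_{q}^{\prime}(s-y)-\frac{W_{q}^{\prime}(s-v)}{W_{q}(s-v)}W_{q}(s-y)\right),
\]
so that the $s$-integral of $\tfrac{W_{q}(x-v)}{W_{q}(s-v)}(\cdots)$ equals $W_{q}(x-v)\big[W_{q}(s-y)/W_{q}(s-v)\big]$ evaluated between the endpoints, using $\lim_{s\to\infty}W_{q}(s-y)/W_{q}(s-v)=\mathrm{e}^{-\Phi_{q}(y-v)}$.

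I expect the only real obstacle to be reconciling the Dirac atom with the continuous term in this last step, done case by case in $y$. For $y<x$ the atom $\delta_{s}(\mathrm{d}y)$ contributes nothing, as it lives on $s=y<x$, outside the range of integration, and one reads off $\big(W_{q}(x-v)\mathrm{e}^{-\Phi_{q}(y-v)}-W_{q}(x-y)\big)\upsilon(y+\mathrm{d}z)\,\mathrm{d}y$, matching the $\{y<x\}$ branch of (\ref{Bf.Ky.}). For $y\geq x$ the continuous part picks up the lower-endpoint value $-W_{q}(0)/W_{q}(y-v)$ at $s=y$, while the atom contributes $+W_{q}(0+)/W_{q}(y-v)$; since $W_{q}(0)=W_{q}(0+)$ these cancel, leaving $W_{q}(x-v)\mathrm{e}^{-\Phi_{q}(y-v)}\upsilon(y+\mathrm{d}z)\,\mathrm{d}y$, which is exactly the $\{y\geq x\}$ branch of (\ref{Bf.Ky.}) after its own $W_{q}(0)$ terms cancel. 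Verifying this precise cancellation of the $W_{q}(0)$ contributions, together with the correct placement of the lower limit $x\vee y$, is the main point to check; everything else is substitution.
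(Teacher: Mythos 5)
Your proposal is correct and follows exactly the route the paper intends: a change of variables in Theorem \ref{3.1}(a), specialization of $\xi,\vartheta$ with the telescoping of the exponential prefactor, and integration in $s$ using the exact-derivative identity $\frac{\mathrm{d}}{\mathrm{d}s}\big(W_{q}(s-y)/W_{q}(s-v)\big)$ together with the asymptotic ratio $\mathrm{e}^{-\Phi_{q}(y-v)}$ (the same identity the paper invokes in the remark after Corollary \ref{3.2}). Your case-by-case treatment of the atom versus the continuous part, including the cancellation of the $W_{q}(0)$ contributions for $y\geq x$, checks out against \eqref{Bf.Ky.}.
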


\medskip
When $\vartheta\equiv\xi\equiv0$, then the general drawdown time is reduced to the classical ruin time $\tau_{\xi}=\tau_{0}^{-}$, $\ell=L^{-1}(L(\tau_{0}^{-})-)$, and the results of Theorem \ref{3.1} are specialized to the following Corollary \ref{3.2}, which coincides well with Theorem 1.3 in Kyprianou and Zhou (2009) with $\gamma\equiv0$.

\begin{cor}\label{3.2}
The generalized expected discounted penalty function at the classical ruin time can be characterized as follows.
\begin{itemize}
\item[(a$^{\prime}$)]
For $s\in(x\vee y,\infty)$, $y\in[0,\infty)$, $z\in (0,\infty)$ and $q,\,\lambda \geq0$, we have
\begin{eqnarray}\label{ }
\hspace{-0.3cm}&&\hspace{-0.3cm}
\mathbb{E}_{x}\left(\mathrm{e}^{-q\ell-\lambda \left(\tau_{0}^{-}-\ell\right)}; \,\overline{X}(\tau_{0}^{-})\in \mathrm{d}s,\,X(\tau_{0}^{-}-)\in\mathrm{d}y,
\,-X(\tau_{0}^{-})\in\mathrm{d}z\right)
\nonumber\\
\hspace{-0.3cm}&=&
\hspace{-0.3cm}
\frac{W_{q}(x)}
{W_{q}(s)}
\Bigg(W_{\lambda}(0+)\,\upsilon(s+\mathrm{d}z)\,\delta_{s}(\mathrm{d}y)
\nonumber\\
\hspace{-0.3cm}&&
\hspace{-0.3cm}
+\left(W_{\lambda}^{\prime}(s-y)-\frac{W_{\lambda}^{\prime}(s)}{W_{\lambda}(s)}W_{\lambda}(s-y)\right)
\upsilon\left(y+\mathrm{d}z\right)
\,\mathbf{1}_{\{y<s\}}
\mathrm{d}y\Bigg)
\mathrm{d}s.\nonumber
\end{eqnarray}

\item[(b$^{\prime}$)]For $s\in(x\vee y,\infty)$ and $q,\,\lambda \geq0$, we have
\begin{eqnarray}\label{ }
\hspace{-0.3cm}&&\hspace{-0.3cm}
\mathbb{E}_{x}\left(\mathrm{e}^{-q\ell-\lambda \left(\tau_{0}^{-}-\ell\right)}; \,\overline{X}(\tau_{0}^{-})\in \mathrm{d}s,
\,X(\tau_{0}^{-})=0\right)
=
\frac{\sigma^{2}}{2}\frac{W_{q}(x)}
{W_{q}(s)}
\left(\frac{\left(W_{\lambda}^{\prime}\left(s\right)\right)^{2}}
{W_{\lambda}\left(s\right)}
-W_{\lambda}^{\prime\prime}\left(s\right)\right)
\mathrm{d}s,\nonumber
\end{eqnarray}
where the expression is understood to be equal to $0$ if $\sigma=0$.
\end{itemize}
\end{cor}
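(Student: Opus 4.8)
The plan is to obtain Corollary \ref{3.2} as a direct specialization of Theorem \ref{3.1}, setting $\vartheta\equiv\xi\equiv0$; no new excursion-theoretic work is required. Under this choice $\xi\equiv0$ turns the drawdown time $\tau_\xi$ into the classical ruin time $\tau_0^-$, while $\varsigma=\vartheta+\xi\equiv0$ collapses the auxiliary functions to $\overline{\xi}(w)=\overline{\varsigma}(w)=w$ for every $w$. I would first record that the surplus-floor constraint in Theorem \ref{3.1}(a), namely $\inf_{t\in[0,\tau_\xi)}(X(t)-\varsigma(\overline{X}(t)))\geq0$, becomes $\inf_{t\in[0,\tau_0^-)}X(t)\geq0$, which holds automatically since $X(t)\geq0$ for every $t<\tau_0^-$ by the very definition of the ruin time; it therefore disappears from part (a$^{\prime}$) rather than being imposed.

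The one genuine computation is the exponential prefactor. I would evaluate
\begin{eqnarray}
\int_{x}^{s}\frac{W_{q}^{\prime}(w)}{W_{q}(w)}\,\mathrm{d}w=\ln W_{q}(s)-\ln W_{q}(x)=\ln\frac{W_{q}(s)}{W_{q}(x)},\nonumber
\end{eqnarray}
which is legitimate because $W_{q}$ is strictly increasing and continuously differentiable with $W_{q}(w)>0$ on $[x,s]$ (as $x>0$). Hence
\begin{eqnarray}
\exp\left(-\int_{x}^{s}\frac{W_{q}^{\prime}(\overline{\varsigma}(w))}{W_{q}(\overline{\varsigma}(w))}\,\mathrm{d}w\right)=\exp\left(-\int_{x}^{s}\frac{W_{q}^{\prime}(w)}{W_{q}(w)}\,\mathrm{d}w\right)=\frac{W_{q}(x)}{W_{q}(s)},\nonumber
\end{eqnarray}
and the identical computation with $\overline{\xi}$ in place of $\overline{\varsigma}$ furnishes the prefactor for part (b$^{\prime}$).

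With this identity in hand the remainder is pure substitution. For part (a$^{\prime}$) I would set $\overline{\varsigma}(s)=s$ in \eqref{pen.1}, so that $W_{\lambda}^{\prime}(\overline{\varsigma}(s))/W_{\lambda}(\overline{\varsigma}(s))$ becomes $W_{\lambda}^{\prime}(s)/W_{\lambda}(s)$ while the atom $W_{\lambda}(0+)\upsilon(s+\mathrm{d}z)\delta_{s}(\mathrm{d}y)$, the L\'{e}vy factor $\upsilon(y+\mathrm{d}z)$, and the indicator $\mathbf{1}_{\{y<s\}}$ are left unchanged, yielding exactly the stated formula. For part (b$^{\prime}$) I would set $\overline{\xi}(s)=s$ in \eqref{pen.2}, turning $(W_{\lambda}^{\prime}(\overline{\xi}(s)))^{2}/W_{\lambda}(\overline{\xi}(s))-W_{\lambda}^{\prime\prime}(\overline{\xi}(s))$ into $(W_{\lambda}^{\prime}(s))^{2}/W_{\lambda}(s)-W_{\lambda}^{\prime\prime}(s)$ and reducing the event $\{X(\tau_\xi)=\xi(\overline{X}(\tau_\xi))\}$ to $\{X(\tau_0^-)=0\}$.

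Since Theorem \ref{3.1} is already established, there is no serious obstacle; the corollary is essentially bookkeeping. The only point that needs a word of care is the trivialization of the surplus-floor constraint in part (a$^{\prime}$), which one must argue is genuinely vacuous rather than silently discarded. As a final consistency check I would confirm that the resulting expressions agree with Theorem 1.3 of Kyprianou and Zhou (2009) in the case $\gamma\equiv0$.
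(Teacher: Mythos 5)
Your proposal is correct and follows exactly the route the paper takes: Corollary 3.2 is obtained by specializing Theorem 3.1 to $\vartheta\equiv\xi\equiv0$, with the prefactor $\exp\bigl(-\int_x^s W_q'(w)/W_q(w)\,\mathrm{d}w\bigr)=W_q(x)/W_q(s)$ and the observation that the surplus-floor constraint is vacuous since $X(t)\geq0$ for $t<\tau_0^-$. The paper states this specialization without writing out the details, and your filled-in steps (including the logarithmic-derivative integration and the consistency check against Kyprianou and Zhou (2009)) are all sound.
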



\medskip
\begin{rem}
Furthermore, if there is no Brownian part in the L\'{e}vy-It\^{o} decomposition of $X$ (i.e., $\sigma=0$), then by $(a^{\prime})$ one may find, for $b>x$, $y\in[0,\infty)$, $z\in (0,\infty)$ and $q \geq0$
\begin{eqnarray}\label{reduce.}
\hspace{-0.3cm}&&\hspace{-0.3cm}
\mathbb{E}_{x}\left(\mathrm{e}^{-q\tau_{0}^{-}}; \,X(\tau_{0}^{-}-)\in\mathrm{d}y,
\,-X(\tau_{0}^{-})\in\mathrm{d}z,\,\tau_{0}^{-}<\tau_{b}^{+}\right)
\nonumber\\
\hspace{-0.3cm}&=&
\hspace{-0.3cm}
\int_{s\in(x,b)}\frac{W_{q}(x)}
{W_{q}(s)}
\Bigg(W_{q}(0+)\,\upsilon(s+\mathrm{d}z)\,\delta_{s}(\mathrm{d}y)
\nonumber\\
\hspace{-0.3cm}&&
\hspace{-0.3cm}
+\left(W_{q}^{\prime}(s-y)-\frac{W_{q}^{\prime}(s)}{W_{q}(s)}W_{q}(s-y)\right)
\upsilon\left(y+\mathrm{d}z\right)
\,\mathbf{1}_{\{y<s\}}
\mathrm{d}y\Bigg)
\mathrm{d}s
\nonumber\\
\hspace{-0.3cm}&=&
\hspace{-0.3cm}
\frac{W_{q}(x)}
{W_{q}(y)}
W_{q}(0+)\,\upsilon(y+\mathrm{d}z)
\,\mathbf{1}_{\{x<y\}}\mathrm{d}y
\nonumber\\
\hspace{-0.3cm}&&
\hspace{-0.3cm}
+
\int_{s\in(x,b)}\frac{W_{q}(x)}
{W_{q}(s)}\left(W_{q}^{\prime}(s-y)-\frac{W_{q}^{\prime}(s)}{W_{q}(s)}W_{q}(s-y)\right)
\upsilon\left(y+\mathrm{d}z\right)
\left(\mathbf{1}_{\{y\leq x\}}+\mathbf{1}_{\{x<y<s\}}\right)
\mathrm{d}y
\mathrm{d}s
\nonumber\\
\hspace{-0.3cm}&=&
\hspace{-0.3cm}
\frac{W_{q}(x)}
{W_{q}(y)}
W_{q}(0+)\,\upsilon(y+\mathrm{d}z)
\,\mathbf{1}_{\{x<y\}}\mathrm{d}y
\nonumber\\
\hspace{-0.3cm}&&
\hspace{-0.3cm}
+
\int_{s\in(x,b)}\frac{W_{q}(x)}
{W_{q}(s)}\left(W_{q}^{\prime}(s-y)-\frac{W_{q}^{\prime}(s)}{W_{q}(s)}W_{q}(s-y)\right)
\mathrm{d}s\,\upsilon\left(y+\mathrm{d}z\right)
\mathbf{1}_{\{y\leq x\}}
\mathrm{d}y
\nonumber\\
\hspace{-0.3cm}&&
\hspace{-0.3cm}
+
\int_{s\in(y,b)}\frac{W_{q}(x)}
{W_{q}(s)}\left(W_{q}^{\prime}(s-y)-\frac{W_{q}^{\prime}(s)}{W_{q}(s)}W_{q}(s-y)\right)
\mathrm{d}s
\,\upsilon\left(y+\mathrm{d}z\right)
\mathbf{1}_{\{x<y\}}\mathrm{d}y
\nonumber\\
\hspace{-0.3cm}&=&
\hspace{-0.3cm}
\frac{W_{q}(x)}
{W_{q}(y)}
W_{q}(0+)\,\upsilon(y+\mathrm{d}z)
\,\mathbf{1}_{\{x<y\}}\mathrm{d}y
\nonumber\\
\hspace{-0.3cm}&&
\hspace{-0.3cm}
+
W_{q}(x)\,\upsilon\left(y+\mathrm{d}z\right)
\left(
\left(\left.\frac{W_{q}(s-y)}{W_{q}(s)}\right|_{s=x}^{b}\right)
\mathbf{1}_{\{y\leq x\}}
\mathrm{d}y
+
\left(\left.\frac{W_{q}(s-y)}{W_{q}(s)}\right|_{s=y}^{b}\right)
\mathbf{1}_{\{x<y\}}\mathrm{d}y
\right)
\nonumber\\
\hspace{-0.3cm}&=&
\hspace{-0.3cm}
W_{q}(x)\,\upsilon\left(y+\mathrm{d}z\right)
\left(
\frac{W_{q}(b-y)}{W_{q}(b)}-\frac{W_{q}(x-y)}{W_{q}(x)}
\right)\mathrm{d}y
.
\end{eqnarray}
where we have used the fact that $\frac{W_{q}^{\prime}(s-y)}{W_{q}(s)}-\frac{W_{q}^{\prime}(s)}{[W_{q}(s)]^{2}}W_{q}(s-y)
=\frac{\mathrm{d}}{\mathrm{d}s}\left(\frac{W_{q}(s-y)}{W_{q}(s)}\right)$ in the fourth equation, and the fact that
$W_{q}(x-y)=0$ for $y >x$ in the last equation. One can find that (\ref{reduce.}) coincides well with Theorem 5.5 on page 41 of Kyprianou (2013), noting that we have $\upsilon\left(y+\mathrm{d}z\right)=\lambda F\left(y+\mathrm{d}z\right)$ in the classical Cram\'{e}r-Lundberg risk process with claim distribution $F$.
\end{rem}

\medskip
\begin{rem}
We say that the results in Theorem \ref{3.1} is the generalized version of the classical expected penalty function at the general drawdown time. 
 In fact, the classical expected penalty function at the general drawdown time can be written  as
\begin{eqnarray}\label{}
\phi(x):=
\mathbb{E}_{x}\left(\mathrm{e}^{-q\tau_{\xi}}\,\omega\left(X(\tau_{\xi}-),
X(\tau_{\xi})\right)\,\mathbf{1}_{\{\tau_{\xi}<\infty\}}\right),\nonumber
\end{eqnarray}
for some bounded measurable bivariate function $\omega(\cdot,\cdot):\,(-\infty,+\infty)^{2}\rightarrow (0,\infty)$. In actuarial sciences, $\omega$ is called the penalty function (cf., Gerber and Shiu (1998)).
Using Theorem \ref{3.1}, one can solve the classical expected penalty function at the general drawdown as
\begin{eqnarray}\label{GS.f.}
\phi(x)
\hspace{-0.3cm}&=&
\hspace{-0.3cm}
\int_{s\in(x,\infty)}\int_{y\in[\xi(s),s]}\int_{z\in(-\xi(s),\infty)}
\omega(y,-z)\exp\left(-\int_{x}^{s}\frac{W_{q}^{\prime}(\overline{\xi}\left(w\right))}
{W_{q}(\overline{\xi}\left(w\right))}\mathrm{d}w\right)
\Bigg(W_{q}(0+)\,\upsilon(s+\mathrm{d}z)\,\delta_{s}(\mathrm{d}y)
\nonumber\\
\hspace{-0.3cm}&&
\hspace{0.3cm}
\left.
+\left(W_{q}^{\prime}(s-y)-\frac{W_{q}^{\prime}(\overline{\xi}(s))}{W_{q}(\overline{\xi}(s))}W_{q}(s-y)\right)
\upsilon\left(y+\mathrm{d}z\right)
\,\mathbf{1}_{\{y<s\}}
\mathrm{d}y\right)
\mathrm{d}s
\nonumber\\
\hspace{-0.3cm}&&
\hspace{-0.3cm}
+
\frac{\sigma^{2}}{2}\int_{s\in(x,\infty)}\omega\left(\xi(s),\xi(s)\right)\exp\left(-\int_{x}^{s}\frac{W_{q}^{\prime}(\overline{\xi}\left(w\right))}
{W_{q}(\overline{\xi}\left(w\right))}\mathrm{d}w\right)
\left(\frac{\left(W_{q}^{\prime}\left(\overline{\xi}\left(s\right)\right)\right)^{2}}
{W_{q}\left(\overline{\xi}\left(s\right)\right)}
-W_{q}^{\prime\prime}\left(\overline{\xi}\left(s\right)\right)\right)
\mathrm{d}s.
\end{eqnarray}

The following interesting example of penalty function can be found in Gerber and Shiu (1998),
$$
\omega(x,y)=\max\left(K-\mathrm{e}^{a-y},0\right).$$
In this case, $\phi(x)$ is the payoff of a perpetual American
put option with $K$ as the exercise price and $\mathrm{e}^{a}$ be the value of an option-exercise
boundary.
\end{rem}

\medskip
\section{Applications}

This section is focused on the applications of Theorem \ref{3.1}. By specifying the general drawdown function, the methodology in this paper can be adapted naturally to recover results in the literature and to obtain new results for the Gerber-Shiu function at ruin for risk processes embedded with a loss-carry-forward taxation system or a barrier dividend strategy. In fact, it was Landriault et al. (2017) and Li (2015) who first pointed out that ruin problems in loss-carry-forward taxation (resp, De Finetti's dividend) models can be transformed to general drawdown problems for the classical models without taxation (resp, dividend). However, they proposed the idea without the implementation of a particular ruin problem. In addition, the drawdown function studied in Landriault et al. (2017) and Li (2015) was the classical drawdown function in the form of $\xi(x)=x-d$ with $d>0$ (not general drawdown function), although it is foreseen that their approach allows an extension to general drawdown function.

\medskip
\subsection{Application for L\'{e}vy risk processes with a general loss-carry-forward system}

\medskip
In Kyprianou and Zhou (2009), a L\'{e}vy risk model with a general loss-carry-forward tax structure was first considered
\begin{eqnarray}\label{2U}
U_{\gamma}(t):=X(t)-\int_{0}^{t}\gamma(\overline{X}(s))\mathrm{d}\overline{X}(s)
=X(t)-\int_{x}^{\overline{X}(t)}\gamma(w)\mathrm{d}w,
\end{eqnarray}
where $\gamma: [0,+\infty)\rightarrow[0,1)$ is measurable and
$
\int_{0}^{\infty}\left(1-\gamma(w)\right)\mathrm{d}w=\infty.
$
In this formulation, taxes are paid whenever the company is in a profitable situation, defined as being at a running maximum of the surplus process.

We claim that the version of Gerber-Shiu function at ruin obtained in Kyprianou and Zhou (2009) can be recovered by specifying a special drawdown function $\xi$ in Theorem \ref{3.1}.
To this purpose, for $x\in(0,\infty)$, let
\begin{eqnarray}\label{taxdd}
\xi_{\gamma}(z):=\int_{x}^{z}\gamma(w)\mathrm{d}w,\quad z\in[x,\infty),
\end{eqnarray}
which is indeed a drawdown function. One can make the following three observations.
\begin{itemize}
\item[($i$)]
The $\xi_{\gamma}$-drawdown time of $X$ coincides with the ruin time of the taxed risk process \eqref{2U}
$$\tau_{\xi_{\gamma}}=\inf\{t\geq0; X(t)<\xi_{\gamma}(\overline{X}(t))\}=\inf\{t\geq0; U_{\gamma}(t)<0\}:=\tau_{0}^{-}(\gamma),$$
and hence $\ell=L^{-1}(L(\tau_{0}^{-}(\gamma))-)$, i.e., the last moment that tax is paid before the ruin of \eqref{2U}.
\item[($ii$)]
The running supremum process $\{\overline{U_{\gamma}}(t):=\sup_{0\leq s\leq t}U_{\gamma}(s);t\geq0\}$ can be rewritten as
$$
\overline{U_{\gamma}}(t)
=x+\int_{0}^{t}(1-\gamma(\overline{X}(s)))\mathrm{d}\overline{X}(s)
=\overline{\xi_{\gamma}}(\overline{X}(t)),$$
and hence, $
\overline{U_{\gamma}}(\tau_{0}^{-}(\gamma))
=\overline{\xi_{\gamma}}(\overline{X}(\tau_{\xi_{\gamma}}))$ and
$$\overline{U_{\gamma}}(\tau_{0}^{-}(\gamma))\in(s,s+\triangle s)
\Leftrightarrow \overline{X}(\tau_{\xi_{\gamma}})\in
\left(\left(\overline{\xi_{\gamma}}\right)^{-1}(s),\left(\overline{\xi_{\gamma}}\right)^{-1}(s+\triangle s)\right),\quad s\geq x,\,\triangle s>0,$$
with $\left(\overline{\xi_{\gamma}}\right)^{-1}$ being the well-defined inverse function of $\overline{\xi_{\gamma}}$.
\item[($iii$)] The taxed surplus level at and immediately before the ruin time $\tau_{0}^{-}(\gamma)$ are rewritten as
$$U_{\gamma}(\tau_{0}^{-}(\gamma))
=X(\tau_{\xi_{\gamma}})-\xi_{\gamma}(\overline{X}(\tau_{\xi_{\gamma}})),
\quad U_{\gamma}(\tau_{0}^{-}(\gamma)-)
=X(\tau_{\xi_{\gamma}}-)-\xi_{\gamma}(\overline{X}(\tau_{\xi_{\gamma}})),$$
and hence, we have, for $z\geq0$ and $\triangle z>0$
$$-U_{\gamma}(\tau_{0}^{-}(\gamma))\in(z,z+\triangle z)
\Leftrightarrow
-X(\tau_{\xi_{\gamma}})\in\left(-\xi_{\gamma}(\overline{X}(\tau_{\xi_{\gamma}}))+z,\,
-\xi_{\gamma}(\overline{X}(\tau_{\xi_{\gamma}}))+z+\triangle z\right)
,$$
and for $y\geq0$ and $\triangle y>0$
$$U_{\gamma}(\tau_{0}^{-}(\gamma)-)\in(y,y+\triangle y)
\Leftrightarrow
X(\tau_{\xi_{\gamma}}-)\in\left(\xi_{\gamma}(\overline{X}(\tau_{\xi_{\gamma}}))+y,\,
\xi_{\gamma}(\overline{X}(\tau_{\xi_{\gamma}}))+y+\triangle y\right).$$
\end{itemize}

The above three observations combined with Theorem 3.1 yields, for $s\geq x>0$, $y,z>0$ and $\triangle s, \triangle y,\triangle z\in(0,\infty)$
\begin{eqnarray}\label{}
\hspace{-0.3cm}&&\hspace{-0.3cm}
\mathbb{E}_{x}\left(\mathrm{e}^{-q\ell-\lambda \left(\tau_{0}^{-}(\gamma)-\ell\right)}; \,\overline{U_{\gamma}}(\tau_{0}^{-}(\gamma))\in (s,s+\triangle s),\,U_{\gamma}(\tau_{0}^{-}(\gamma)-)\in(y,y+\triangle y),
\,-U_{\gamma}(\tau_{0}^{-}(\gamma))\in(z,z+\triangle z)\right)
\nonumber\\
\hspace{-0.3cm}&=&
\hspace{-0.3cm}
\int_{\overline{s}\,\in \left(\left(\overline{\xi_{\gamma}}\right)^{-1}(s),\left(\overline{\xi_{\gamma}}\right)^{-1}(s+\triangle s)\right)}
\int_{\overline{y}\,\in\left(\xi_{\gamma}(\overline{s})+y,\,
\xi_{\gamma}(\overline{s})+y+\triangle y\right)}
\int_{\overline{z}\,\in\left(-\xi_{\gamma}(\overline{s})+z,\,
-\xi_{\gamma}(\overline{s})+z+\triangle z\right)}
\nonumber\\
\hspace{-0.3cm}&&
\times\,
\mathbb{E}_{x}\left(\mathrm{e}^{-q\ell-\lambda \left(\tau_{\xi_{\gamma}}-\ell\right)}; \,\overline{X}(\tau_{\xi_{\gamma}})\in \mathrm{d}\overline{s},\,X(\tau_{\xi_{\gamma}}-)\in\mathrm{d}\overline{y},
\,-X(\tau_{\xi_{\gamma}})\in\mathrm{d}\overline{z}\right),\nonumber
\end{eqnarray}
which combined with \eqref{pen.1} (with $\vartheta\equiv0$) yields
\begin{eqnarray}\label{}
\hspace{-0.3cm}&&\hspace{-0.3cm}
\mathbb{E}_{x}\left(\mathrm{e}^{-q\ell-\lambda \left(\tau_{0}^{-}(\gamma)-\ell\right)}; \,\overline{U_{\gamma}}(\tau_{0}^{-}(\gamma))\in \mathrm{d}s,\,U_{\gamma}(\tau_{0}^{-}(\gamma)-)\in \mathrm{d}y,
\,-U_{\gamma}(\tau_{0}^{-}(\gamma))\in\mathrm{d}z\right)
\nonumber\\
\hspace{-0.3cm}&=&
\hspace{-0.3cm}
\frac{1}{1-\gamma\left(\left(\overline{\xi_{\gamma}}\right)^{-1}(s)\right)}
\exp\left(-\int_{x}^{\left(\overline{\xi_{\gamma}}\right)^{-1}(s)}
\frac{W_{q}^{\prime}(\overline{\xi_{\gamma}}\left(w\right))}
{W_{q}(\overline{\xi_{\gamma}}\left(w\right))}\mathrm{d}w\right)
\Bigg(W_{\lambda}(0+)\,\upsilon(s+\mathrm{d}z)
\,\delta_{s}(\mathrm{d}y)
\nonumber\\
\hspace{-0.3cm}&&
\hspace{-0.3cm}
\left.
+\left(W_{\lambda}^{\prime}(s-y)
-\frac{W_{\lambda}^{\prime}(s)}{W_{\lambda}(s)}W_{\lambda}(s-y)\right)
\upsilon\left(y+\mathrm{d}z\right)
\,\mathbf{1}_{\{y<s\}}
\mathrm{d}y\right)
\mathrm{d}s,\nonumber
\end{eqnarray}
which recovers the first equation in Theorem 1.3 of Kyprianou and Zhou (2009).
Similarly,  for $s\geq x>0$ we have
\begin{eqnarray}\label{}
\hspace{-0.3cm}&&\hspace{-0.3cm}
\mathbb{E}_{x}\left(\mathrm{e}^{-q\ell-\lambda \left(\tau_{0}^{-}(\gamma)-\ell\right)}; \,\overline{U_{\gamma}}(\tau_{0}^{-}(\gamma))\in (s,s+\triangle s),\,U_{\gamma}(\tau_{0}^{-}(\gamma))=0\right)
\nonumber\\
\hspace{-0.3cm}&=&
\hspace{-0.3cm}
\int_{\overline{s}\,\in \left(\left(\overline{\xi_{\gamma}}\right)^{-1}(s),\left(\overline{\xi_{\gamma}}\right)^{-1}(s+\triangle s)\right)}
\int_{\overline{z}\,\in\{\xi_{\gamma}(\overline{s})\}}
\mathbb{E}_{x}\left(\mathrm{e}^{-q\ell-\lambda \left(\tau_{\xi_{\gamma}}-\ell\right)}; \,\overline{X}(\tau_{\xi_{\gamma}})\in \mathrm{d}\overline{s},\,X(\tau_{\xi_{\gamma}})\in\mathrm{d}\overline{z}\right),\nonumber
\end{eqnarray}
which together with \eqref{pen.2} yields
\begin{eqnarray}\label{}
\hspace{-0.3cm}&&\hspace{-0.3cm}
\mathbb{E}_{x}\left(\mathrm{e}^{-q\ell-\lambda \left(\tau_{0}^{-}(\gamma)-\ell\right)}; \,\overline{U_{\gamma}}(\tau_{0}^{-}(\gamma))\in (s,s+\triangle s),\,U_{\gamma}(\tau_{0}^{-}(\gamma))=0\right)
\nonumber\\
\hspace{-0.3cm}&=&
\hspace{-0.3cm}
\frac{1}{1-\gamma\left(\left(\overline{\xi_{\gamma}}\right)^{-1}(s)\right)}
\exp\left(-\int_{x}^{\left(\overline{\xi_{\gamma}}\right)^{-1}(s)}\frac{W_{q}^{\prime}(\overline{\xi_{\gamma}}\left(w\right))}
{W_{q}(\overline{\xi_{\gamma}}\left(w\right))}\mathrm{d}w\right)
\frac{\sigma^{2}}{2}
\left(\frac{\left(W_{\lambda}^{\prime}\left(s\right)\right)^{2}}
{W_{\lambda}\left(s\right)}
-W_{\lambda}^{\prime\prime}\left(s\right)\right)
\mathrm{d}s,\nonumber
\end{eqnarray}
which recovers the second equation in Theorem 1.3 of Kyprianou and Zhou (2009).

\medskip
\subsection{Application for L\'{e}vy risk processes with a barrier dividend strategy}

\medskip
Consider the following L\'{e}vy risk process with a barrier dividend strategy
\begin{eqnarray}
\label{3U}
R_{b}(t):=X(t)-\left(\overline{X}(t)-b\right)\vee0,
\end{eqnarray}
where $b\in(x,\infty)$ is the dividend barrier level. The risk process \eqref{3U} is well-known as the \lq\lq De Finetti's dividend model\rq\rq. For a variety of dividend risk models driven by compound Poisson processes or Brownian motions, the Gerber-Shiu function has been studied by many authors (see for example, Lin et al. (2003)), typically involves an \lq\lq infinitesimal time interval argument\rq\rq or the approach of conditioning on the time and amount of the first claim, which differs from our excursion argument. However, to the best knowledge of the authors, the Gerber-Shiu function in the context of general L\'{e}vy risk processes with a barrier dividend strategy, has not yet been studied. Here in this subsection, based on the surplus process \eqref{3U}, we attempt to express
the corresponding Gerber-Shiu function in terms of the scale functions and the L\'evy measure associated with $X$.

To fix \eqref{3U} into our drawdown setup, let
$$\xi_{b}(z):=\left(z-b\right)\vee0,\quad z\in[x,\infty),$$
which is indeed a drawdown function. One can make the following three observations.
\begin{itemize}
\item[($i^{\prime}$)]
The $\xi_{b}$-drawdown time of $X$ coincides with the ruin time of the risk process \eqref{3U}
$$\tau_{\xi_{b}}=\inf\{t\geq0; X(t)<\xi_{b}(\overline{X}(t))\}=\inf\{t\geq0; R_{b}(t)<0\}:=\tau_{0}^{-}(b).$$
\item[($ii^{\prime}$)]
The running supremum process $\{\overline{R_{b}}(t):=\sup_{0\leq s\leq t}R_{b}(s);t\geq0\}$ can be rewritten as
$$
\overline{R_{b}}(t)
=\overline{X}(t)\wedge b
=\overline{\xi_{b}}(\overline{X}(t)),$$
and hence, for small $\triangle s>0$ we have
\begin{eqnarray}
\overline{R_{b}}(\tau_{0}^{-}(b))\in[s,s+\triangle s)
\Leftrightarrow\left\{
\begin{aligned}
&\overline{X}(\tau_{\xi_{b}})\in
\left[s,s+\triangle s\right),&  x\leq s<b,\\
&\overline{X}(\tau_{\xi_{b}})\in
\left[b,\infty\right),&  x\leq s=b,\\
&\emptyset,&  \text{otherwise}.
\end{aligned}
\right.
\nonumber
\end{eqnarray}
\item[($iii^{\prime}$)] The surplus level (with dividend deducted) at and immediately before $\tau_{0}^{-}(b)$ are
$$R_{b}(\tau_{0}^{-}(b))
=X(\tau_{\xi_{b}})-\xi_{b}(\overline{X}(\tau_{\xi_{b}})),
\quad R_{b}(\tau_{0}^{-}(b)-)
=X(\tau_{\xi_{b}}-)-\xi_{b}(\overline{X}(\tau_{\xi_{b}})),$$
and hence, for $z\geq0$ and $\triangle z>0$
$$-R_{b}(\tau_{0}^{-}(b))\in[z,z+\triangle z)
\Leftrightarrow
-X(\tau_{\xi_{b}})\in\left[-\xi_{b}(\overline{X}(\tau_{\xi_{b}}))+z,\,
-\xi_{b}(\overline{X}(\tau_{\xi_{b}}))+z+\triangle z\right)
,$$
and for $y\in[0,b]$ and small $\triangle y>0$
$$R_{b}(\tau_{0}^{-}(b)-)\in[y,y+\triangle y)
\Leftrightarrow
X(\tau_{\xi_{b}}-)\in\left[\xi_{b}(\overline{X}(\tau_{\xi_{b}}))+y,\,
\xi_{b}(\overline{X}(\tau_{\xi_{b}}))+y+\triangle y\right).$$
\end{itemize}

The above observations and Theorem 3.1 yield, for $s\in[x,b)$, $y\in[x,b)$, $z>0$ and small $\triangle s, \triangle y,\triangle z\in(0,\infty)$,
\begin{eqnarray}\label{}
\hspace{-0.3cm}&&\hspace{-0.3cm}
\mathbb{E}_{x}\left(\mathrm{e}^{-q\ell-\lambda \left(\tau_{0}^{-}(b)-\ell\right)}; \,\overline{R_{b}}(\tau_{0}^{-}(b))\in [s,s+\triangle s),\,R_{b}(\tau_{0}^{-}(b)-)\in[y,y+\triangle y),
\,-R_{b}(\tau_{0}^{-}(b))\in[z,z+\triangle z)\right)
\nonumber\\
\hspace{-0.3cm}&=&
\hspace{-0.3cm}
\int_{\overline{s}\,\in \left[s,s+\triangle s\right)}
\int_{\overline{y}\,\in\left[\xi_{b}(\overline{s})+y,\,
\xi_{b}(\overline{s})+y+\triangle y\right)}
\int_{\overline{z}\,\in\left[-\xi_{b}(\overline{s})+z,\,
-\xi_{b}(\overline{s})+z+\triangle z\right)}
\nonumber\\
\hspace{-0.3cm}&&
\times\,
\mathbb{E}_{x}\left(\mathrm{e}^{-q\ell-\lambda \left(\tau_{\xi_{b}}-\ell\right)}; \,\overline{X}(\tau_{\xi_{b}})\in \mathrm{d}\overline{s},\,X(\tau_{\xi_{b}}-)\in\mathrm{d}\overline{y},
\,-X(\tau_{\xi_{b}})\in\mathrm{d}\overline{z}\right),\nonumber
\end{eqnarray}
which together with \eqref{pen.1} (with $\vartheta\equiv0$) and the fact that $\overline{\xi_{b}}(s)=s$ for $s\in[x,b]$, yield
\begin{eqnarray}\label{}
\hspace{-0.3cm}&&\hspace{-0.3cm}
\mathbb{E}_{x}\left(\mathrm{e}^{-q\ell-\lambda \left(\tau_{0}^{-}(b)-\ell\right)}; \,\overline{R_{b}}(\tau_{0}^{-}(b))\in \mathrm{d}s,\,R_{b}(\tau_{0}^{-}(b)-)\in \mathrm{d}y,
\,-R_{b}(\tau_{0}^{-}(b))\in\mathrm{d}z\right)
\nonumber\\
\hspace{-0.3cm}&=&
\hspace{-0.3cm}
\frac{W_{q}(x)}
{W_{q}(s)}
\Bigg(W_{\lambda}(0+)\,\upsilon(s+\mathrm{d}z)\,\delta_{s}(\mathrm{d}y)
\nonumber\\
\hspace{-0.3cm}&&
\hspace{-0.3cm}
\left.
+\left(W_{\lambda}^{\prime}(s-y)-\frac{W_{\lambda}^{\prime}(s)}{W_{\lambda}(s)}W_{\lambda}(s-y)\right)
\upsilon\left(y+\mathrm{d}z\right)
\,\mathbf{1}_{\{y<s\}}
\mathrm{d}y\right)
\mathrm{d}s,\quad s, y\in[x,b), z>0.\nonumber
\end{eqnarray}
For $y\in[x,b]$, $z>0$ and small $\triangle s, \triangle y,\triangle z\in(0,\infty)$
we have
\begin{eqnarray}\label{}
\hspace{-0.3cm}&&\hspace{-0.3cm}
\mathbb{E}_{x}\left(\mathrm{e}^{-q\ell-\lambda \left(\tau_{0}^{-}(b)-\ell\right)}; \,\overline{R_{b}}(\tau_{0}^{-}(b))=b,\,R_{b}(\tau_{0}^{-}(b)-)\in[y,y+\triangle y),
\,-R_{b}(\tau_{0}^{-}(b))\in[z,z+\triangle z)\right)
\nonumber\\
\hspace{-0.3cm}&=&
\hspace{-0.3cm}
\int_{\overline{s}\,\in [b,\infty)}
\int_{\overline{y}\,\in\left[\xi_{b}(\overline{s})+y,\,
\xi_{b}(\overline{s})+y+\triangle y\right)}
\int_{\overline{z}\,\in\left[-\xi_{b}(\overline{s})+z,\,
-\xi_{b}(\overline{s})+z+\triangle z\right)}
\nonumber\\
\hspace{-0.3cm}&&
\times\,
\mathbb{E}_{x}\left(\mathrm{e}^{-q\ell-\lambda \left(\tau_{\xi_{b}}-\ell\right)}; \,\overline{X}(\tau_{\xi_{b}})\in \mathrm{d}\overline{s},\,X(\tau_{\xi_{b}}-)\in\mathrm{d}\overline{y},
\,-X(\tau_{\xi_{b}})\in\mathrm{d}\overline{z}\right),\nonumber
\end{eqnarray}
which together with \eqref{pen.1} (with $\vartheta\equiv0$) yields
\begin{eqnarray}\label{}
\hspace{-0.3cm}&&\hspace{-0.3cm}
\mathbb{E}_{x}\left(\mathrm{e}^{-q\ell-\lambda \left(\tau_{0}^{-}(b)-\ell\right)}; \,\overline{R_{b}}(\tau_{0}^{-}(b))=b,\,R_{b}(\tau_{0}^{-}(b)-)\in \mathrm{d}y,
\,-R_{b}(\tau_{0}^{-}(b))\in\mathrm{d}z\right)
\nonumber\\
\hspace{-0.3cm}&=&
\hspace{-0.3cm}
\int_{s\in[b,\infty)}\exp\left(-\int_{x}^{s}\frac{W_{q}^{\prime}(\overline{\xi_{b}}\left(w\right))}
{W_{q}(\overline{\xi_{b}}\left(w\right))}\mathrm{d}w\right)
\Bigg(W_{\lambda}(0+)\,\upsilon(\overline{\xi_{b}}(s)+\mathrm{d}z)\,\delta_{\overline{\xi_{b}}(s)}(\mathrm{d}y)
\nonumber\\
\hspace{-0.3cm}&&
\hspace{-0.3cm}
\left.
+\left(W_{\lambda}^{\prime}(\overline{\xi_{b}}(s)-y)-\frac{W_{\lambda}^{\prime}(\overline{\xi_{b}}(s))}{W_{\lambda}(\overline{\xi_{b}}(s))}W_{\lambda}(\overline{\xi_{b}}(s)-y)\right)
\upsilon\left(y+\mathrm{d}z\right)
\,\mathbf{1}_{\{y<\overline{\xi_{b}}(s)\}}
\mathrm{d}y\right)
\mathrm{d}s,\quad y\in[x,b], z>0.\nonumber
\end{eqnarray}

Similarly, we have
\begin{eqnarray}\label{}
\hspace{-0.3cm}&&\hspace{-0.3cm}
\mathbb{E}_{x}\left(\mathrm{e}^{-q\ell-\lambda \left(\tau_{0}^{-}(b)-\ell\right)}; \,\overline{R_{b}}(\tau_{0}^{-}(b))\in \mathrm{d}s,\,R_{b}(\tau_{0}^{-}(b))=0\right)
\nonumber\\
\hspace{-0.3cm}&=&
\hspace{-0.3cm}
\frac{\sigma^{2}}{2}\left(\frac{W_{q}(x)}
{W_{q}(s)}
\left(\frac{\left(W_{\lambda}^{\prime}\left(s\right)\right)^{2}}
{W_{\lambda}\left(s\right)}
-W_{\lambda}^{\prime\prime}\left(s\right)\right)\mathbf{1}_{[x,b)}(s)
\,\mathrm{d}s\right.
\nonumber\\
\hspace{-0.3cm}&&
\hspace{-0.3cm}
\left.+\,
\delta_{b}(\mathrm{d}s)\int_{b}^{\infty}\exp\left(-\int_{x}^{z}\frac{W_{q}^{\prime}(\overline{\xi_{b}}\left(w\right))}
{W_{q}(\overline{\xi_{b}}\left(w\right))}\mathrm{d}w\right)
\left(\frac{\left(W_{\lambda}^{\prime}\left(\overline{\xi_{b}}\left(z\right)\right)\right)^{2}}
{W_{\lambda}\left(\overline{\xi_{b}}\left(z\right)\right)}
-W_{\lambda}^{\prime\prime}\left(\overline{\xi_{b}}\left(z\right)\right)\right)
\mathrm{d}z\right).\nonumber
\end{eqnarray}

\section{Numerical examples}

The results in Section 3 are illustrated with several examples in this section. One quantity of interest is the probability of drawdown, which includes the probability of ruin as a special case. The other item is the joint density of the drawdown time and the first time when the running maximum prior to the drawdown time is hit.

\subsection{A Comparison of Drawdown Probabilities and Ruin Probabilities}

 The numerical results in this subsection is based on equation (\ref{GS.f.}) where we let the penalty function be $\omega(x,y)\equiv 1$ and the discount factor $q\equiv 0$. Then the expected penalty function at general drawdown is specialized to $\phi(x)=
\mathbb{E}_{x}\left(\mathbf{1}_{\{\tau_{\xi}<\infty\}}\right)=\mathbb{P}_{x}(\tau_{\xi}<\infty)$, which refers to the probability of general drawdown. For simplicity, we consider the general drawdown function in a linear form $\xi(x)=ax-b$. According to the definition of general drawdown function in Section 2, we require $a\in(-\infty,1)$ and $b>0$. Then $X(t)<\xi(\overline{X}(t))=a \overline{X}(t)-b$ is equivalent to $a \overline{X}(t)-X(t)>b$, which refers to the surplus process drops $b$ units below $100a$ percent of its maximum to date. In this section, we compare the evolution of the probability of drawdown according to the following four sets of parameters.

\begin{table}[!hp] \vspace{-8pt}\caption{The parameters for the linear drawdown function.}
\renewcommand{\arraystretch}{1.1}\centering \setlength{\tabcolsep}{4pt} 
\begin{tabular}{|c|c|c|}
  \hline
   & $a$ & $b$ \\ \hline
  (I) & 0 & 0 \\  \hline
  (II) & 0.3 & 0.5 \\  \hline
  (III) & 0.5 & 0.5 \\ \hline
  (IV) & 0.6& 0.5 \\
  \hline
\end{tabular}
\end{table}

It is obvious from the table that (I) refers to the ruin case. Starting from the value $X(0)=x$, say $x=1$, suppose that at some time point $t>0$ we have $X(t)<0$, then $0.6\overline{X}(t)-X(t)\geq 0.6-X(t)>0.6>0.5$. That is, the drawdown time for case (IV) must occur before the ruin time, which is denoted by $\tau_{IV}<\tau_{I}$. Similarly, we have $\tau_{IV}<\tau_{III}<\tau_{II}$ and $\tau_{IV}<\tau_{III}<\tau_{I}$. Accordingly, the sooner the general drawdown time occurs in theory, the higher occurring probability corresponding to it. We point out that the relation between case (I) and case (II) is not clear, which depends on the underlying risk process and the parameter setting. We can see the difference in the following examples.

\begin{exa}
{Cram\'e}r-Lundberg model with exponential jumps.
{\rm Suppose that the {L\'e}vy process is given by the {Cram\'e}r-Lundberg model with exponential jumps. To be specific, when $X(t)$ is reduced to a compound Poisson process with Poisson arrival rate $\lambda_0>0$, premium rate $c$, and claim sizes following an exponential distribution with mean $1/\mu>0$,
$$X(t)=x+ct-\sum_{i=1}^{N(t)}Y_{i},\quad t\geq0,$$
then the expected discounted penalty function at the general drawdown time, that is the probability of general drawdown is given by (\ref{GS.f.}) with $\upsilon(\mathrm{d}z)=\lambda_0 F(\mathrm{d}z)=\lambda_0 \mu e^{-\mu z}\mathrm{d}z$ and $\sigma=0$. The explicit expression for the scale function is available as,
\begin{eqnarray}
W_{q}(x)=\frac{A_1(q)}{c}\mathrm{e}^{\theta_1(q)x}
-\frac{A_2(q)}{c}\mathrm{e}^{\theta_2(q)x},\quad x\geq0,\label{Ex2_W}
\nonumber
\end{eqnarray}
with $A_1(q)=\frac{\mu+\theta_1(q)}{\theta_1(q)-\theta_2(q)}$ and $A_2(q)=\frac{\mu+\theta_2(q)}{\theta_1(q)-\theta_2(q)}$, where $\theta_1(q)
=\frac{\lambda_0+q-c\mu+\sqrt{(c\mu-\lambda_0-q)^2+4cq\mu}}{2c}$ and $\theta_2(q)
=\frac{\lambda_0+q-c\mu-\sqrt{(c\mu-\lambda_0-q)^2+4cq\mu}}{2c}$. Due to $\sigma=0$, we only need to compute the first two lines of integrals in (\ref{GS.f.}), which represent the risk measurement brought by exponentially distributed claims. We are interested in the impact of the initial capital $x$ and the premium rate $c$ on the probability of drawdown.

We list the parameter values as follows: $x=1$, $c=1.1$ and $\lambda_0=\mu=2$. In the classical risk theory, the higher $x$ or $c$, the lower probability of ruin. This can be verified by Figure 1. We can also observe a similar trend for the probability of drawdown, which is decreasing along with $x$ or $c$. Figure 1 also verify our prior analysis on the relations between the results of different drawdown functions. That is, case (IV) has the highest value, followed by case (III), then case (II) and (I). Unlike the unclear theoretical comparison between case (I) and case (II), we observe that case (II) stays on top of case (I) in Figure 1. This may due to the exponentially distributed claim size nature of the underlying risk process, which results in a similar relation for the jump-diffusion model in Figure 3.

\begin{figure}[!hp]
\centerline{\includegraphics[width=15cm,
height=10cm]{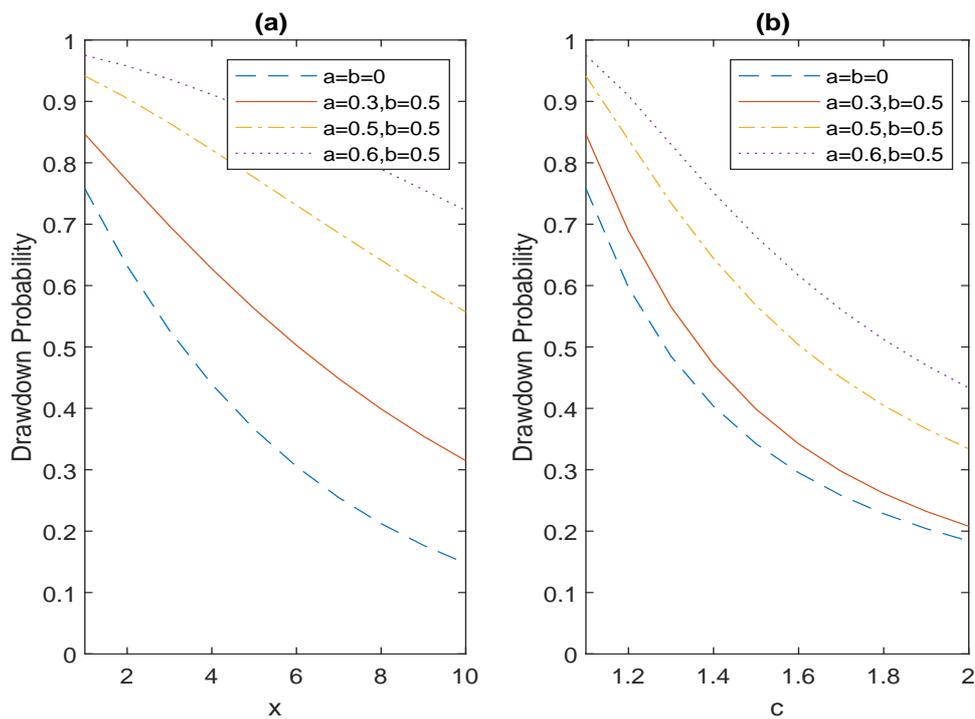}}
\caption{(a) The drawdown probability as a function of $x\in[1,10]$ for $c=1.1$  and (b) as a function of $c\in [1.1, 2.1]$ for $x=1 $ in compound Poisson process.}
\label{fig1}%
\end{figure}

Another observation from Figure 1 is, the probability of drawdown is more sensitive to the change of premium rate $c$ than the initial capital $x$. Similar slops are produced from part $(a)$ as $x$ increases from unit 1 to unit 10, and from part (b) as $c$ increases from 1.1 to 2.1. This gives the risk manager a hint that, the adjustment on the premium rate has a more immediate effect on the risk level than an injection of capital.
}
\end{exa}

\begin{exa}\label{Bexa} Brownian motion with drift.
{\rm Brownian motion (with or without drift) is the only continuous L{\'e}vy process. When $X$ is reduced to a Brownian motion with drift
$$X_{t}=x+ \mu t+\sigma B_{t},\quad t\geq0,\,\, \mu\neq0,\,\,\sigma>0.$$
Then the explicit expression for the scale function is written as
\begin{eqnarray}\label{15}
W_{q}(x)=a_0(e^{\lambda_{1}x}
-e^{\lambda_{2}x}),\quad x\geq0,
\end{eqnarray}
where $a_0=(2q\sigma^{2}+\mu^{2})^{-\frac{1}{2}}$, $\lambda_{1}=\frac{(2q\sigma^{2}+\mu^{2})^{\frac{1}{2}}-\mu}{\sigma^{2}}$ and $\lambda_{2}=\frac{-(2q\sigma^{2}+\mu^{2})^{\frac{1}{2}}-\mu}{\sigma^{2}}$. Under this continuous risk process, the L{\'e}vy measure $\upsilon(\mathrm{d}z)=0$. Then the expected discounted penalty function at general drawdown time is given by the third line only in equation (\ref{GS.f.}).

\begin{figure}[!hp]
\centerline{\includegraphics[width=15cm,
height=10cm]{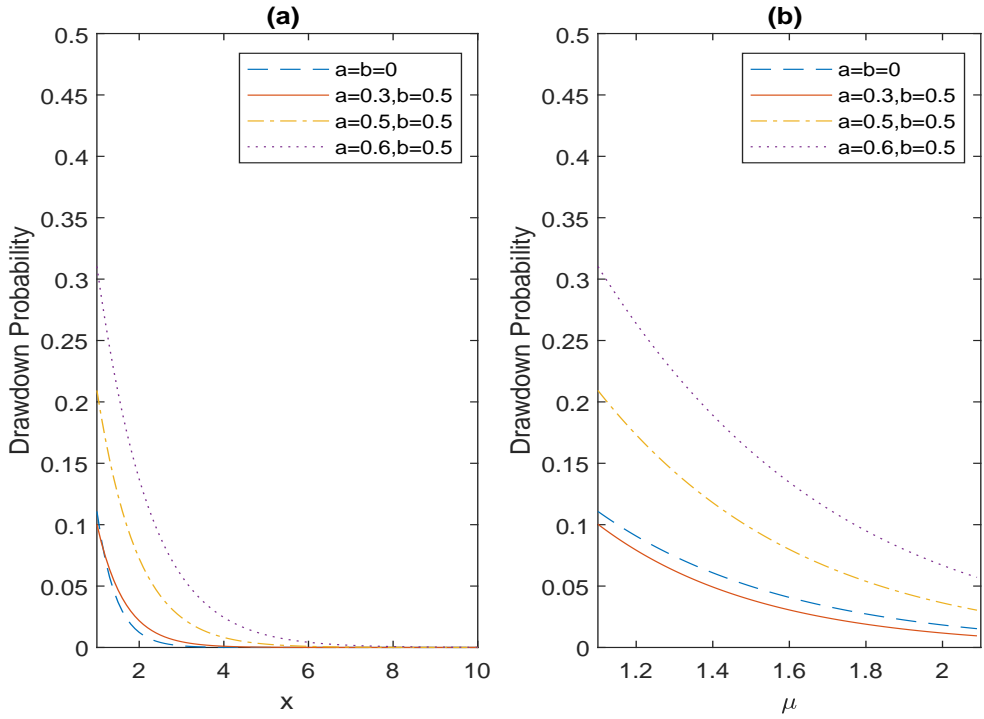}}
\caption{(a) The drawdown probability as a function of $x\in[1,10]$ for $\mu=1.1$  and (b) as a function of $\mu\in [1.1, 2.1]$ for $x=1 $ in Brownian motion process.}
\label{fig2}%
\end{figure}

Taking $\sigma=1$ which is a relative low volatility level comparing to the drift parameter $\mu$ increasing from 1.1 to 2.1 as plotted in Figure 2. Accordingly, we expect a low probability of ruin as well as the probability of general drawdown. This can be seen in Figure 2 that the levels are much lower than those in Figure 1. Comparing to Figure 1, the overall trends in Figure 2 are similar. The only difference is the blur relation between case (I) and case (II), which is also blur in theory and may comes from the small fluctuations described by the Brownian motion.}
\end{exa}

\begin{exa}Jump-diffusion process.
{\rm When $X$ is reduced to a jump-diffusion process,
\[{X(t)} = x + ct + \sigma {W_t} - \sum\limits_{i = 1}^{{N(t)}} {{Y_i}} {\rm{ }},\quad t\geq 0,\]
where $\sigma {\rm{ > 0}}$, $ \left\{ {{N(t)},t \ge 0} \right\}$  is a Poisson process with arrival rate $\lambda_0$, and $Y_{i}$'s  are a sequence of i.i.d. random variables distributed with Erlang $ \left( {2,\alpha } \right).$ The scale function associated with $X$ can be derived as (cf., Loeffen (2008))
\begin{align}\label{expressionsofWqin2}
{W_{q}}\left( x \right) = \sum\limits_{j = 1}^4 {{D_j}\left( q \right){\mathrm{e}^{{\theta _j}\left( q \right)x}}} ,{\kern 1pt} {\kern 1pt} {\kern 1pt} {\kern 1pt} {\kern 1pt} {\kern 1pt} {\kern 1pt} {\kern 1pt} {\kern 1pt} {\kern 1pt} {\kern 1pt} x \ge 0,
 \end{align}
where
\begin{align}\label{expressionsofdj}
{D_j}\left( q \right) = \frac{{{{\left( {\alpha  + {\theta _j}\left( q \right)} \right)}^2}}}{{\frac{\sigma ^2}{2}\prod\limits_{i = 1,i \ne j}^4 {\left( {{\theta _j}\left( q \right) - {\theta _i}\left( q \right)} \right)} }},
\nonumber
 \end{align}
and ${\theta _j}\left( q \right)\left( {j = 1, \cdots 4} \right)$ are the (distinct) zeros of the polynomial
\begin{align}
\nonumber &\quad\Big(c\theta-\lambda_0+\frac{\lambda_0\alpha^2}{(\alpha+\theta)^2}+\frac{1}{2}\sigma^2\theta^2-q\Big)(\alpha+\theta)^2\\
\nonumber &=\frac{1}{2}\sigma^2\theta^4+(\alpha\sigma^2+c)\theta^3+\big(\frac{1}{2}\sigma^2\alpha^2-\lambda_0-q+2c\alpha\big)\theta^2+\big[c\alpha^2-2(\lambda_0+q)\alpha\big]\theta-q\alpha^2.
\end{align}
Then the expected discounted penalty function at the general drawdown is given by equation (\ref{GS.f.}) with $\upsilon(\mathrm{d}z)=\lambda_0 \alpha^{2} z \mathrm{e}^{-\alpha z} \mathrm{d}z$ and $W_{q}$ given by (\ref{expressionsofWqin2}).

\begin{figure}[!hp]
\centerline{\includegraphics[width=15cm,
height=10cm]{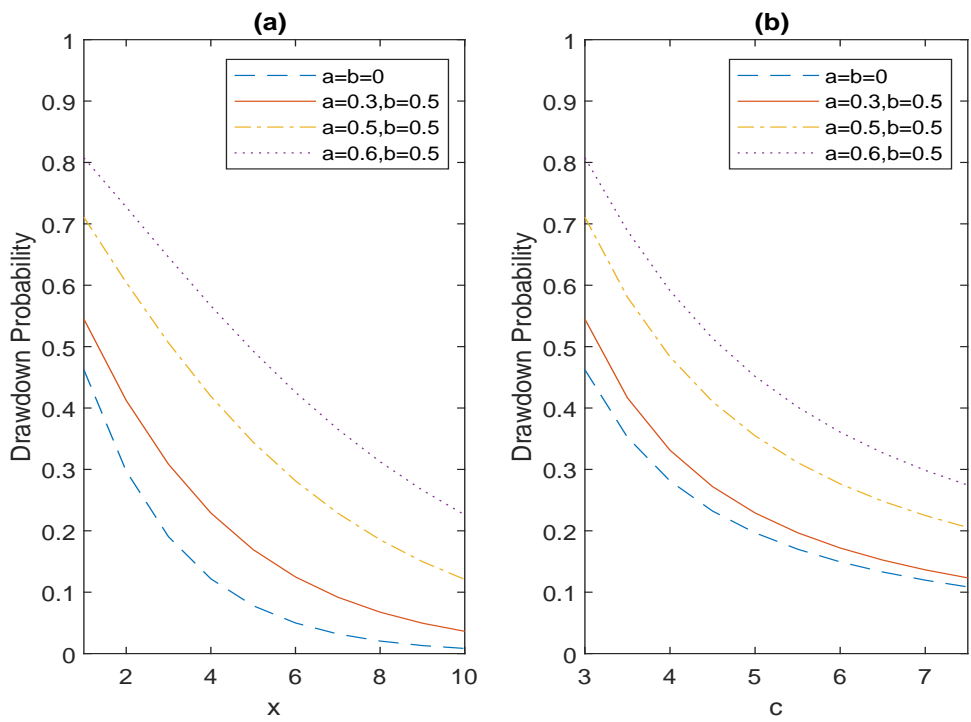}}
\caption{(a) The drawdown probability as a function of $x\in[1,10]$ for $c=3$  and (b) as a function of $c\in [3, 7.5]$ for $x=1 $ in jump diffusion process.}
\label{fig3}%
\end{figure}

The parameters in this example are: $\sigma=0.5$ and $\lambda_0=\alpha=2$. Comparing to the aforementioned two examples, we have neither $\nu(dz)=0$ nor $\sigma=0$ for the jump-diffusion process, which results in the involvement of all the three lines of equation (\ref{GS.f.}) in our computation. Then it is natural to expect a much higher probability of drawdown than the previous two examples. Intuitively speaking, the drawdown of the jump-diffusion process is composed of two parts: the small fluctuations described by the Brownian motion $W_t$ and the large jumps described by the compound Poisson process $\sum_{i = 1}^{{N(t)}} {{Y_i}}$. This has been verified in our computation that we have to choose a much higher premium rate $c$ than the previous examples, otherwise the ruin probability would be 1. In Figure 3, we let the initial capital increase from 1 to 10 in Part $(a)$, and premium rate increase from 3 to 7.5 in part (b), then we produce a similar trend as in Figure 1.}
\end{exa}

\subsection{Joint Density of $\tau_{\xi}$ and $\ell$. }

The first time when the running maximum prior to $\tau_{\xi}$ is hit is denoted by $\ell$ in this paper. This subsection intends to study the joint distribution of the drawdown time $\tau_{\xi}$ and $\ell$, as well as the effects of the model parameters. By Theorem \ref{3.1}, the joint Laplace transform of $\ell$ and $\tau_{\xi}$ is
\begin{eqnarray}\label{BLap}
\hspace{-0.3cm}&&\hspace{-0.3cm}
\mathbb{E}_{x}\left(\mathrm{e}^{-q\ell-\lambda \left(\tau_{\xi}-\ell\right)}\,\mathbf{1}_{\{\tau_{\xi}<\infty\}}\right)
\nonumber\\
\hspace{-0.3cm}&=&
\hspace{-0.3cm}
\int_{s\in(x,\infty)}\int_{y\in[\xi(s),s]}\int_{z\in(-\xi(s),\infty)}\exp\left(-\int_{x}^{s}\frac{W_{q}^{\prime}(\overline{\xi}\left(w\right))}
{W_{q}(\overline{\xi}\left(w\right))}\mathrm{d}w\right)
\Bigg(W_{\lambda}(0+)\,\upsilon(s+\mathrm{d}z)\,\delta_{s}(\mathrm{d}y)
\nonumber\\
\hspace{-0.3cm}&&
\hspace{-0.3cm}
\left.
+\left(W_{\lambda}^{\prime}(s-y)-\frac{W_{\lambda}^{\prime}(\overline{\xi}(s))}
{W_{\lambda}(\overline{\xi}(s))}W_{\lambda}(s-y)\right)
\upsilon\left(y+\mathrm{d}z\right)
\,\mathbf{1}_{\{y<s\}}
\mathrm{d}y\right)
\mathrm{d}s
\nonumber\\
\hspace{-0.3cm}&&\hspace{-0.3cm}
+
\frac{\sigma^{2}}{2}\int_{s\in(x,\infty)}\exp\left(-\int_{x}^{s}\frac{W_{q}^{\prime}(\overline{\xi}\left(w\right))}
{W_{q}(\overline{\xi}\left(w\right))}\mathrm{d}w\right)
\left(\frac{\left(W_{\lambda}^{\prime}\left(\overline{\xi}\left(s\right)\right)\right)^{2}}
{W_{\lambda}\left(\overline{\xi}\left(s\right)\right)}
-W_{\lambda}^{\prime\prime}\left(\overline{\xi}\left(s\right)\right)\right)
\mathrm{d}s.
\end{eqnarray}

For computational simplicity, our numerical results are based on {\it Brownian motion with drift} Example \ref{Bexa} . The main parameters are $x=1$, $\mu=0.3$ and $\sigma=1$. In terms of $\tau_{\xi}$ we still use the linear drawdown function defined in the previous subsection. And, we only consider two cases for the parameters: the ruin case with $a=b=0$ and the drawdown case with $a=0.6$ and $b=0.5$. In the following, we take equation (\ref{BLap}) as the joint Laplace transform of $\tau_{\xi}$ and $\ell$, and then take the inverse transformation to derive the joint density distribution. Our algorithm is based on the method of Fourier series expansion proposed by Moorthy (1995), which is one of the most worthwhile methods in the numerical inversion of Laplace transforms. There are also many other basic methods available in the literature, say for example, the Laguerre function expansion and Combination of Gaver functions. These basic methods breed over 100 algorithms on the subject. As we have explained in Section 1, it is hard to find a universal algorithm that works for all the cases and we are not intending to develop a perfect algorithm in this paper. Therefore, we just follow the Fourier series expansion method to write our codes in Matlab, and accept the instability behavior at the boundaries of the produced results.

In Figure 4, $f(t_1, t_2)$ refers to the joint density function of random variables $\tau_{\xi}$ (corresponds to $t_1$) and $\ell$ (corresponds to $t_2$). We make the following observations. Firstly, the overall trend of $f(t_1, t_2)$ goes to zero as $t_1$ or $t_2$ gets bigger and bigger indicating that the drawdown time or the running maximum hitting time  occurring at a later time has a smaller probability, which is consistent with the path behavior of a \emph{Brownian motion with positive drift}. In fact, due to $\mu>0$ we have $\lim\limits_{t\rightarrow\infty}X(t)=\infty$, which means that, if drawdown occurs, it occurs at a finite time, and less and less likely to occur at a later time until infinity. Secondly, since $\ell$ refers to a time point that is prior to $\tau_{\xi}$, then the value of the probability density $f(t_1,t_2)=0$ for $t_1<t_2$. This can be seen clearly in Figure 4 that all the positively valued $f(t_1,t_2)$ are distributed on the side of $t_1$ axis. Thirdly, the drawdown time is expected to occur before the ruin time, then the value of $f(t_1,t_2)$ is more concentrated at smaller values of $t_1$ and $t_2$ in the drawdown case than in the ruin case. We can see clearly that in Figure 4 that the drawdown case in (b) builds up a higher value than the ruin case in (a).

\begin{figure}[!hp]
\centerline{\includegraphics[width=15cm,
height=10cm]{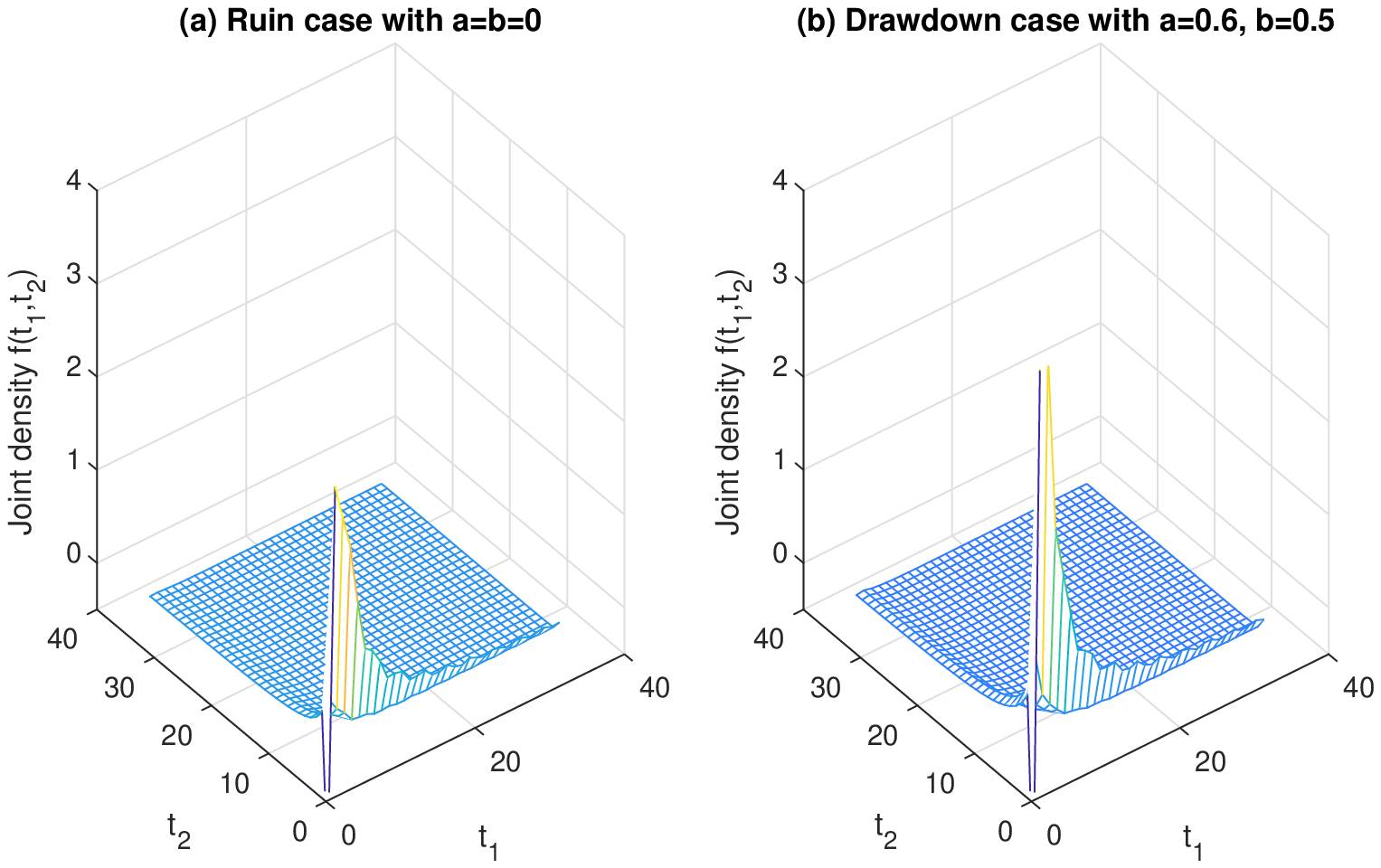}}
\caption{ The joint density $f(t_1,t_2)$ for $\tau_{\xi}$ and $\ell$.}
\label{fig4}%
\end{figure}

Next we look at the effects of parameters $x$, $\mu$ and $\sigma$ on the value of $f(t_1,t_2)$. The basic parameters are $x=1$, $\mu=0.3$, $\sigma=1$, $a=0.6$ and $b=0.5$, which lead to (a) of Figure 5. When we change the initial capital $x$ from 1 to 2 in (b), the drawdown time is expected to occur later intuitively, and correspondingly we observe a lower but fatter joint density function in (b). Similarly, a bigger value of $\mu$ helps to build up the surplus value which in turn leads to a later drawdown time. We also observe a lower but fatter distribution in (c) comparing to (a). The effect of $\sigma$ goes to the other direction, the resulted joint density in (d) is higher and sharper comparing to (a), which explains the higher uncertainty brought by a larger value of $\sigma$.

\begin{figure}[!hp]
\centerline{\includegraphics[width=15cm,
height=10cm]{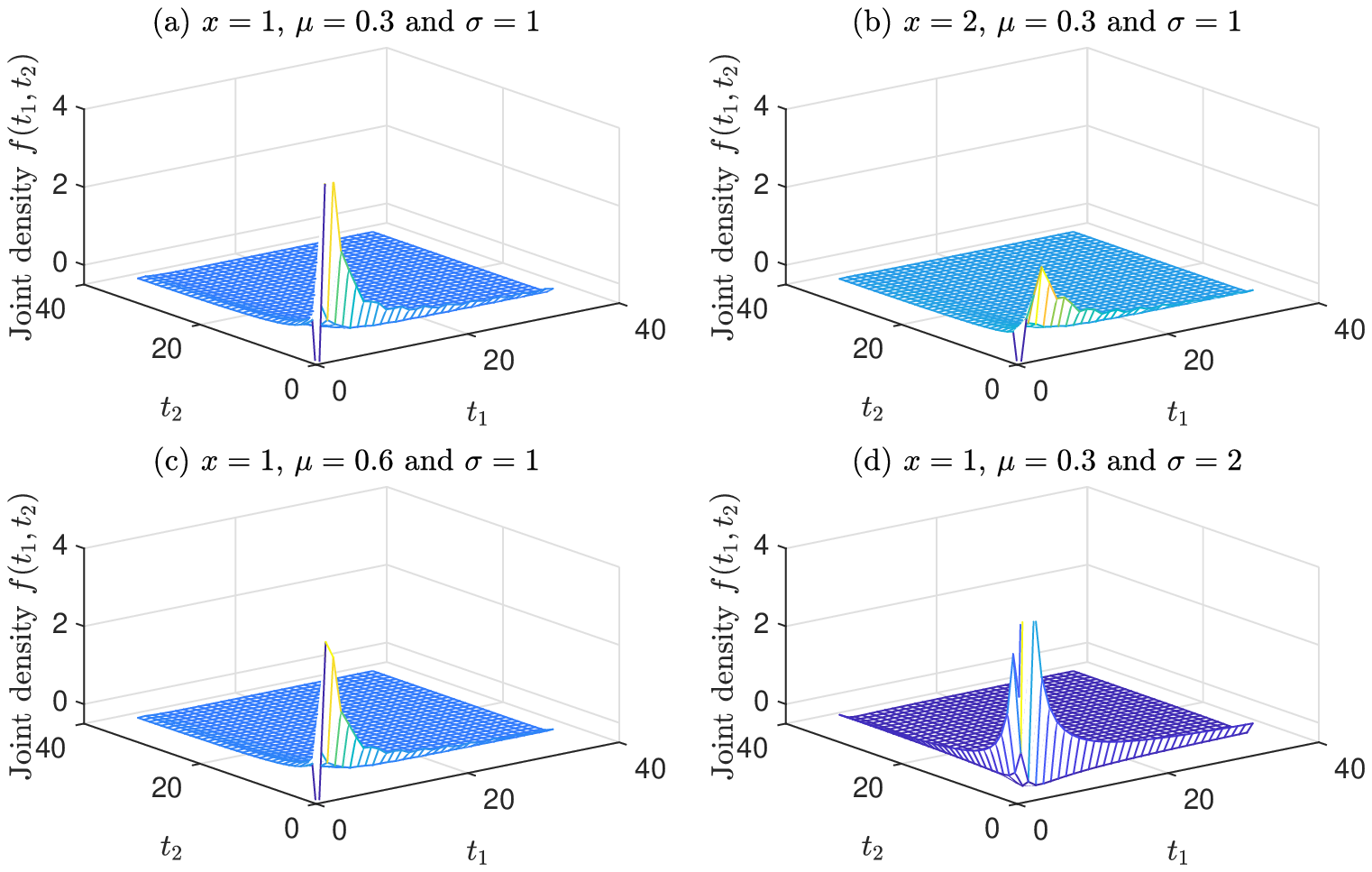}}
\caption{ The joint density $f(t_1,t_2)$ for $\tau_{\xi}$ and $\ell$.}
\label{fig5}%
\end{figure}

\section{Conclusion}

In this paper, the generalized Gerber-Shiu function at general drawdown time is considered for a spectrally negative L\'{e}vy process. It is shown that in the present model, the extended Gerber-Shiu function can be expressed in terms of the $q$-scale functions and the L\'{e}vy measure associated with the L\'{e}vy process. This expression makes it possible to study the joint distribution of the time of drawdown, the running maximum at drawdown, the last minimum before drawdown, the surplus before drawdown and the surplus at drawdown, which broaden the family of risk indicators and measurements. The motivation of such an extension from the time of ruin to the time of drawdown is two folds. First, thanks to the development of the excursion approach in solving boundary crossing problems related with L\'{e}vy processes, such that the derivation of the extended Gerber-Shiu function is possible. Second, the time of drawdown has a clearer description of the company's financial position than the time of ruin. Then the insurer can take actions more promptly and effectively, such as adjusting the premium rate or injecting more capital to prevent even worse situations.




\vspace{1cm}
\section*{Acknowledgements}
The authors are very grateful to 
Professor Xiaowen Zhou at Concordia University for his helpful comments on 
this paper.

\bibliographystyle{elsarticle-num}
\bibliography{reference}

\begin{thebibliography}{99}



\bibitem{}

Albano, G. and Giorno, V., 2006. A stochastic model in tumor growth. \emph{Journal of Theoretical Biology}, {\bf 242(2)}, 329-336.


\bibitem{}
Angoshtari, B., Bayraktar, E. and Young, V., 2016. Minimizing the probability of lifetime drawdown under constant consumption. \emph{Insurance: Mathematics and Economics}, {\bf 69}, 210-223.



\bibitem{}
Avram, F., Kyprianou, A. and Pistorius, M., 2004. Exit problems for spectrally negative L\'{e}vy processes
and applications to (Canadized) Russian options. \emph{The Annals of Appllied Probability}, {\bf 14}, 215-238.

\bibitem{}
Avram, F., Vu, N. L. and Zhou, X., 2017. On taxed spectrally negative L{\'e}vy processes with drawdown stopping. \emph{Insurance: Mathematics and Economics}, {\bf 76}, 69-74.




\bibitem{}
Baurdoux, E., 2007. Fluctuation Theory and Stochastic Games
for Spectrally Negative L\'{e}vy Processes. Doctoral Thesis, de Universiteit Utrecht.


\bibitem{}
Bertoin, J., 1996. L\'{e}vy Processes. \emph{Cambridge University Press}.

\bibitem{}
Biffis E. and Kyprianou A., 2010. A note on scale functions and the time value of ruin for L\'evy insurance risk processes. \emph{Insurance: Mathematics and Economics}, {\bf46(1)}, 85-91.

\bibitem{}
Biffis, E. and Morales, M., 2010. On a generalization of the Gerber-Shiu function to path-dependent penalties. \emph{Insurance: Mathematics and Economics}, {\bf 46}, 92-97.

\bibitem{}
Carr, P., 2014. First-order calculus and option pricing. \emph{Journal of Financial Engineering}, {\bf 1}, 1450009.

\bibitem{}
Chen, X., Landriault, D., Li, B. and Li, D., 2015. On minimizing drawdown risks of lifetime investments. \emph{Insurance: Mathematics and Economics}, {\bf 65}, 46-54.

\bibitem{}
Cherny, V. and Obloj, J., 2013. Portfolio optimisation under non-linear drawdown constraints in a semimartingale financial model. \emph{Finance and Stochastics}, {\bf 17(4)}, 771-800.

\bibitem{}
Cvitanic, J. and Karatzas, I., 1995. On portfolio optimization under \lq\lq drawdown\rq\rq constraints. \emph{IMA Volumes in Mathematics and its Applications}, {\bf 65}, 35-46.


\bibitem{}
Davies, B., 2002. Integral Transforms and Their Applications. \emph{Springer, New York}.

\bibitem{}
Dufresne, F. and Gerber, H., 1993. The probability of ruin for the inverse Gaussian and related processes. \emph{Insurance: Mathematics and Economics}, {\bf 12}, 9-22.

\bibitem{}
Elie, R. and Touzi, N., 2008. Optimal lifetime consumption and investment under a drawdown constraint. \emph{Finance and Stochastics}, {\bf 12(3)}, 299-330.

\bibitem{}
Garrido, J. and Morales, M., 2006. On the expected discounted penalty function for L\'{e}vy risk processes. \emph{North American Actuarial Journal}, {\bf 10 (4)}, 196-216.



\bibitem{}
Gerber H. and Shiu, E., 1998. On the time value of ruin. \emph{North American Actuarial Journal}, {\bf 2(1)}, 48-72.

\bibitem{}
Grossman, S. and Zhou, Z., 1993. Optimal investment strategies for controlling drawdowns. \emph{Mathematical Finance}, {\bf 3(3)}, 241-276.

\bibitem{}
Han, X., Liang, Z. and Yuen, K. C., 2018. Optimal proportional reinsurance to minimize the probability of drawdown under thinning-dependence structure. \emph{Scandinavian Actuarial Journal}, {\bf 10}, 863-889.

\bibitem{}
Kyprianou, A., 2013. Gerber-Shiu risk theory (EAA Series), \emph{Springer-Verlag}.




\bibitem{}
Kyprianou, A. and Pistorius, M., 2003. Perpetual options and Canadization
through fluctuation theory. \emph{The Annals of Appllied Probability}, {\bf 13}, 1077-1098.

\bibitem{}
Kyprianou, A., 2006. Introductory Lectures on Fluctuations of L\'{e}vy Processes with Applications, \emph{Springer}.

\bibitem{}
Kyprianou, A. and Zhou, X., 2009. General tax structures and the L\'{e}vy
insurance risk model. \emph{Journal of Applied Probability}, {\bf 46}, 1146-1156.

\bibitem{}
Kuznetsov, A., Kyprianou, A. and Rivero, V., 2012. The theory of scale functions for spectrally negative
L\'{e}vy processes. In L\'{e}vy Matters II (Lecture Notes Math. 2061), \emph{Springer}, Heidelberg, 97-186.

\bibitem{}
Landriault, D., Li, B. and Zhang, H., 2015. On the frequency of drawdowns for brownian motion processes. \emph{Journal of Applied Probability}, {\bf 52(1)}, 191-208.

\bibitem{}
Landriault, D., Li, B. and Zhang, H., 2017. On magnitude, asymptotics and duration of drawdowns for L{\'e}vy models. \emph{Bernoulli}, {\bf 23(1)}, 432-458.

\bibitem{}
Loeffen, R., 2008. On optimality of the barrier strategy in de Finetti's dividend problem for spectrally negative L\'{e}vy processes. \emph{Annals of Applied Probability}, {\bf 18}, 1669-1680.

\bibitem{}
Loeffen, R., Palmowski, Z. and Surya, B., 2018. Discounted penalty function at Parisian ruin for L\'{e}vy insurance risk process. \emph{Insurance: Mathematics and Economics}, {\bf 83}, 190-197.



\bibitem{}
Li, B., Vu, N. and Zhou, X. 2017. Exit problems for general drawdown times of spectrally negative L\'{e}vy
processes. \emph{arXiv 1702.07259}.

\bibitem{}
Li, S., 2015. Adaptive policies and drawdown problems in insurance risk models. \emph{Doctor thesis}.

\bibitem{}
Lin, X., Willmot, G. and Drekic, S., 2003. The classical risk model with a constant dividend barrier: analysis of the Gerber-Shiu discounted penalty function. \emph{Insurance: Mathematics and Economics}, {\bf 33(3)}, 551-566.



\bibitem{}
Moorthy, M., 1995. Numerical inversion of two-dimensional Laplace transforms-Fourier series representation. \emph{Applied Numerical Mathematics}, {\bf 17}, 119-127.

\bibitem{}
Pistorius, M., 2004. On exit and ergodicity of the spectrally one-sided L\'{e}vy process reflected at its infimum. \emph{Journal of Theoretical Probability}, {\bf 17(1)}, 183-220.

\bibitem{}
Pistorius, M., 2007. An excursion-theoretical approach to some boundary crossing problems and the skorokhod embedding for refrected L\'{e}vy processes. In \emph{S\'{e}minaire de Probabilit\'{e}s} XL, 287-307. \emph{Springer}.

%
%

\bibitem{}
Roche, H., 2006. Optimal consumption and investment strategies under wealth ratcheting. Preprint. Available at: http://ciep.itam.mx/~hroche/Research/MDCRESFinal.pdf.

\bibitem{}
Shepp, L. and Shiryaev, A., 1993. The Russian option: reduced regret. \emph{The Annals of Applied Probability}, {\bf 3}, 631-640.


\bibitem{}
Taylor, H., 1975. A stopped Brownian motion formula. \emph{The Annals of Applied Probability}, {\bf 3}, 234-246.

\bibitem{}
Wang, W. and Zhou, X., 2018. General drawdown based de Finetti optimization for  spectrally negative L\'{e}vy risk processes. \emph{Journal of Applied Probability}, {\bf 55(2)}, 513-542.

\bibitem{}
Yang, H. and Zhang, L., 2001. Spectrally negative L\'{e}vy processes with applications in risk theory. \emph{Advances in Applied Probability}, {\bf 33(1)}, 281-291.



\end{thebibliography}

\end{document}